\newcommand{\C}{{\mathcal C}}
\newcommand{\dd}{\mathrm{d}}
\newcommand{\ww}{\mathrm{w}}
\newcommand{\cnk}{{\mathcal C}(n,k,q)}
\begin{document}

\Logo{Paper submitted to Journal of Algebra and Its Applications (JAA)}

\begin{frontmatter}

\title{Computing the distance distribution of systematic non-linear codes}

{\author{Eleonora Guerrini}}
{{\tt (guerrini@posso.dm.unipi.it)}}\\
{{Department of Mathematics, University of Trento, Italy.}}

{\author{Emmanuela Orsini}}
{\tt{(orsini@posso.dm.unipi.it) }}\\
{{Department of Mathematics, University of Pisa, Italy.}}

{\author{Massimiliano Sala}} {\tt{(msala@bcri.ucc.ie)}}\\
{Boole Centre for Research in Informatics, UCC Cork, Ireland,\\
Department of Mathematics, University of Trento, Italy.}

\runauthor{E.~Guerrini, E.~Orsini, M.~Sala}

\begin{abstract}
The most important families of non-linear codes are systematic. A
brute-force check 
  is the only known method to compute their weight distribution and distance
distribution.
On the other hand, it outputs also all closest word pairs in the
code. In the black-box complexity model, the check is optimal
among closest-pair algorithms.
In this paper we provide a \Gr\ basis technique to compute the weight/distance 
distribution of any systematic non-linear code. Also our technique
outputs all closest pairs. Unlike the check, our
method can be extended to work on code families.
\end{abstract}

\begin{keyword}
\GR\  basis, distance distribution, Hamming distance, non-linear code.
\end{keyword}
\end{frontmatter}

\section{Introduction}
In the celebrated paper \cite{CGC-cd-art-shannon} by Shannon, the mathematical
foundation of coding theory was laid.
Codes presented in that paper are non-linear, notably including  codes
used in the proof of the landmark Capacity theorem.
Although no proof  for linear codes of the Capacity theorem was known until the 60's
(\cite{CGC-cd-phdthesis-gallager}), coding theorists have been studying only linear codes, with
a few exceptions (\cite{CGC-cd-art-preparata},\cite{CGC-cd-art-goethal}).
This is not surprising, since linear codes have a nice structure, easy
to study and leading to efficient implementations.
Still, it is well-known that some non-linear codes have a higher distance
(or a better distance distribution) that any linear code with the same 
parameters (\cite{CGC-cd-art-preparata}, \cite{CGC-cd-book-handbook}). 
This translates into a superior decoding performance.

A class of non-linear codes that has received some attention is composed
of systematic non-linear codes, since they are easier to encode.
Moreover, the best known non-linear codes are systematic or equivalent
to systematic ones (\cite{CGC-cd-art-goethal},\cite{CGC-cd-book-handbook},
\cite{CGC-cd-art-hammons}), so that no performance degradation
is shown while restricting to systematic codes.
In this paper our main result is a method, based
on  \Gr\ basis computations, that allows to determine the (distance
and) distance distribution of a systematic code. 
No other method is known, except for the ``brute-force'' approach consisting
of checking the mutual distance of any pair of codewords.
Both our method and the brute-force approach output the {\em closest pairs},
i.e. the codeword pairs whose distance is minimal. 
We show that the complexity of the ``brute-force'' approach matches 
the complexity of the closest pair problem and so this method is optimal 
(the proof is given within the black-box complexity model with distance oracle).

\section{Notation and preliminary results}
\label{prel}
\newcommand{\hi}{\hspace{3cm}}
\newcommand{\h}{\hspace{0.4cm}}

Let  $m\geq 1$ be a natural number.
Let $\KK$ be a field, $\KKK$ be the algebraic closure of $\KK$
and $I$ be an ideal in the polynomial ring $\KK[Y]=\KK[y_1, \dots, y_m]$.
For any $q\geq 1$, we denote by $E_q[Y]\subset \KK[Y]$ the   set of polynomials  
$
E_q[Y]=\{y_1^q-y_1,\ldots,y_m^q-y_m  \} \,.
$
Given a polynomial $f \in \KK[Y]$, we denote by $\mathcal{V}(f)$
the set of all zeros of $f$ in $(\KKK)^m$.
Given an ideal $I \subseteq \KK[Y]$, we denote by $\mathcal{V}(I)$
the set of all zeros of $I  $ in $  (\KKK)^m$. 
Let $S \subset  (\KKK)^m$. The set of all polynomials $f \in \KK[Y]$
such that $f(a_1,\dots,a_m)=0$ for any point $(a_1,\dots,a_m) $ in $S$
forms an ideal in polynomial ring $\KK[Y]$, called the {\bf{vanishing ideal}} of $S$  
and   denoted by $\mathcal{I}(S)$.
If $L\subset \KK[Y]$, we denote by $\langle L\rangle$ the ideal in $\KK[Y]$ generated by $L$.

Let $\FF_q$ be the finite field with $q$ elements and 
$(\FF_q)^m$ be the natural $m$--dimensional vector space over $\FF_q$.

\begin{definition}
 Let $1\leq t\leq m$. We denote by $\mathcal{M}_{m,t,q}$ the following
  set:
$$
  \mathcal{M}_{m,t,q} = \{ y_{h_1}\cdots y_{h_t} \, 
    \mid   1 \leq h_1< \ldots < h_t\leq m \}.   
$$
\end{definition}
From now  on, we will use  $\mathcal{M}_{m,t}$ instead of  $\mathcal{M}_{m,t,q}$.
We will also shorten $y_{h_1}\cdots y_{h_t}$ to $Y_L$, where $L=\{h_1,\ldots,h_t\}$.

Let $s$ be an integer $1\leq s \leq m-1$. 
We fix in
$\mathbb{F}_q[y_1,\ldots ,y_s,t_1,\ldots ,t_{m-s}]=\FF_q[Y,T]$, the
lexicographic order   $y_1< y_2< \ldots < y_s <t_1<\ldots <t_{m-s} $.
Let $I$ be an ideal in  $\FF_q[Y,T]$ we denote by
$G(I) \subset \mathbb{F}_q[Y,T]$  the minimal reduced \GR \  basis of
$I$ w.r.t.  $<$ ordering (\cite{CGC-alg-phdthesis-buchberger,CGC-alg-art-thesisbuchbergerEN,CGC-alg-book-cox1}).

Let
$\phi:(\mathbb{F}_q)^k \rightarrow (\mathbb{F}_q)^n$
be an injective function and let $C$ be ${\rm Im}(\phi)$.
We say that $C$ is an $(n,k,q)$ {\bf code}.
Any $ c\in C$  is called a  {\bf word}.
Let  $\pi:(\mathbb{F}_q)^n \rightarrow (\mathbb{F}_q)^k $ be
$\pi(a_1,\ldots ,a_n)= (a_1,\ldots,a_k)$.
We say that $C$ is  {\bf systematic } if
$ (\pi \circ\phi)(v)=v$ for any $v\in (\FF_q)^k$.
We denote by $\mathcal{C}(n,k,q)$ the class of systematic  $(n,k,q)$  codes.

For any two vectors  $v_1,v_2 \in (\FF_q)^n$,  $\dd(v_1,v_2)$ denotes
the {\bf (Hamming) distance}     between
$v_1$ and $v_2$.
For any $v \in(\FF_q)^n  $, $\ww(v)$ denotes the {\bf weight } of
$v$. Let   
  $C \in \mathcal{C}(n,k,q)$,   $\dd(C)$ denotes the     distance   of
  $C$,    
  $B_i=B_i(C)$ denotes the number of codewords in $C$ 
  with weight $i$.
  Integer set $\{B_0,B_1, \dots, B_n\}$ is called the 
  {\bf{weight distribution}} of $C$ and
  $A_i=A_i(C)$ denotes the number 
of (unordered) codeword pairs  with distance $i$.
Integer set $\{A_1, \dots, A_n\}$ is called the 
{\bf{distance distribution}} of $C$.
 
A code $C$ is  {\bf{distance-invariant}} if for 
any $1\leq i\leq n$ and any $c,c'\in C$, 
$$
  |\{y \in C \mid  \dd(c,y)=i\}| \; =\; |\{y \in C \mid  \dd(c',y)=i\}| \;.
$$
Clearly linear codes are distance-invariant.
The distance distribution of distance-invariant codes 
(containing the zero vector) can be immediately obtained 
from their weight distribution.
Note that many optimal codes are distance-invariant codes (\cite{CGC-cd-art-hammons}).


\label{weights}
We want to show some relations between some
sets of vectors in $(\FF_q)^n$ having a weight property and the \GR\ bases of associated
ideals.
We   use polynomial ring $\FF_q[Y]$  with any   term-order, since our results here 
hold for any (admissible) term-order.\\
The {\it{elementary symmetric functions}}  are the polynomials
$\sigma_1 = y_1+\cdots+y_m,\ldots,
\sigma_{m} = y_1 y_2 y_3 \cdots y_{m-2}y_{m-1}y_m$.
\begin{definition}
Let $t \in \NN$ be s.t. $1\leq t \leq m$. 
We denote by  $I_{m,t}$ the ideal:
$$
 I_{m,t}= \langle \{\sigma_t,\dots,\sigma_m\} \cup E_q[Y]\rangle\subset \FF_q[Y] \,.
$$
\end{definition}
Our aim is to determine the reduced \GR\ basis of $I_{m,t}$.
For this, we need a preliminary result.
\begin{lemma}\label{mezzo_ideale}
Let  $l \in \NN$ be such that $1 \leq l\leq m-1$. Then 
$$
{\mathcal M}_{m,l} \subset \langle  {\mathcal M}_{m,l+1} \cup \{\sigma_l\} \cup  E_q[Y] \rangle \,.
$$
\end{lemma}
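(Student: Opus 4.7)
The plan is to show, for each fixed $L\subset\{1,\ldots,m\}$ with $|L|=l$, that the squarefree monomial $Y_L$ belongs to the ideal $J:=\langle\mathcal{M}_{m,l+1}\cup\{\sigma_l\}\cup E_q[Y]\rangle$. The guiding observation is that, because $y_i^q\equiv y_i$ modulo $\langle E_q[Y]\rangle$, the polynomial $Y_L^{q-1}$ behaves like a multiplicative pseudo-inverse for $Y_L$: multiplying $Y_L$ by it and reducing modulo $E_q[Y]$ recovers $Y_L$ itself. This suggests multiplying the generator $\sigma_l$ by $Y_L^{q-1}$ and reading off $Y_L$ from the resulting expansion.

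Concretely, I would write $\sigma_l=\sum_{|L'|=l}Y_{L'}$ and split
\[
Y_L^{q-1}\,\sigma_l \;=\; Y_L^{q}\;+\;\sum_{\substack{|L'|=l\\ L'\neq L}}Y_L^{q-1}\,Y_{L'},
\]
then treat the two pieces separately. For the first piece, I would expand $Y_L^{q}-Y_L=y_{h_1}^{q}\cdots y_{h_l}^{q}-y_{h_1}\cdots y_{h_l}$ as a telescoping sum in which each summand is a polynomial multiple of some $y_{h_i}^{q}-y_{h_i}\in E_q[Y]$, showing $Y_L^{q}\in Y_L+\langle E_q[Y]\rangle$. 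For the second piece, note that if $L'\neq L$ and $|L'|=l$, then $L'$ contains some index $b\notin L$, so the support of $Y_L^{q-1}Y_{L'}$ is exactly $L\cup L'$ and has size at least $l+1$; hence this monomial is divisible by the degree-$(l+1)$ squarefree monomial on any $(l+1)$-subset of $L\cup L'$, in particular by an element of $\mathcal{M}_{m,l+1}$. Every cross term thus lies in $\langle\mathcal{M}_{m,l+1}\rangle$.

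Putting the two pieces together gives
\[
Y_L\;\equiv\;Y_L^{q-1}\,\sigma_l\pmod{\langle\mathcal{M}_{m,l+1}\cup E_q[Y]\rangle},
\]
so $Y_L\in J$, and since $L$ was arbitrary this yields $\mathcal{M}_{m,l}\subset J$. The only non-routine step is the choice of the multiplier $Y_L^{q-1}$, and this is where I expect the main (indeed the only) conceptual obstacle to lie: once this choice is identified, the remaining work is squarefree-support bookkeeping together with the Fermat-style reductions coming from $E_q[Y]$.
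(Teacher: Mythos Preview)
Your proof is correct and rests on the same key idea as the paper's: multiply $\sigma_l$ by suitable $(q-1)$-th powers of variables and exploit the reductions $y_i^q\equiv y_i$ coming from $E_q[Y]$. The execution, however, differs. The paper proceeds iteratively: it first multiplies $\sigma_l$ by $y_{i_1}^{q-1}$ and reduces to conclude $\sum_{\{i_1\}\subset A}Y_A\in I$, then multiplies this sum by $y_{i_2}^{q-1}$ to obtain $\sum_{\{i_1,i_2\}\subset B}Y_B\in I$, and so on, peeling off one index of $L$ at a time until only $Y_L$ remains. Your single multiplication by $Y_L^{q-1}$ collapses these $l$ steps into one, and your observation that every cross term $Y_L^{q-1}Y_{L'}$ with $L'\neq L$ already has support of size at least $l+1$ replaces the paper's step-by-step splitting. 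The result is a shorter and cleaner argument; the paper's version has the minor expository benefit of exhibiting the intermediate partial sums, but your direct route is entirely adequate and arguably preferable.
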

\begin{proof}
Let $I$ be the ideal $I= \langle  {\mathcal M}_{m,l+1}\cup \{\sigma_l\} \cup  E_q[Y]\rangle$.\\
Let ${\mathcal A}= \{ A \subset \{1,..,m \} \mid |A|=l \} $ 
and $L=\{i_1,\ldots,i_l\} \in {\mathcal A}$. We have:
\begin{equation}
    \label{eq:sigma_l}
    \sigma_l=\sum_{ A \in {\mathcal A}\atop{\{i_1\} \subset A
}} Y_A + \sum_{ B\in {\mathcal A}\atop{\{i_1\} \not\subset B}}Y_B 
\,\,\,  y_{i_1}^{q-1}\sigma_l=y_{i_1}^{q-1}\sum_{ A\in {\mathcal A}\atop{\{i_1\} \subset A
}}Y_A +y_{i_1}^{q-1} \sum_{B \in {\mathcal A}\atop{ \{i_1\}\not\subset B}}Y_B 
\,.
  \end{equation}
We note that $y_{i_1} \sum_{\{i_1\}\not\subset B}Y_B $ is a sum of monomials
 in ${\mathcal M_{m,l+1}}$,
 hence\\
$y_{i_1}^{q-2} ( y_{i_1} \sum_{\{i_1\}\not\subset B}Y_B) = 
y_{i_1}^{q-1} \sum_{\{i_1\}\not\subset B}Y_B $ is in $I$.
By (\ref{eq:sigma_l}), since $y_{i_1}^{q-1}\sigma_l \in I$, we obtain that
$$
y_{i_1}^{q-1}\sum_{A \in {\mathcal A}\atop{ \{i_1\} \subset A}}Y_A \;\in\; I\,.
$$
From (\ref{eq:sigma_l}), since any monomial in $\sum_{\{i_1\} \subset A} Y_A$
contains $y_{i_1}$, by reduction w.r.t. $E_q[Y]$ we have
$
\sum_{A\in {\mathcal A} \atop{\{i_1\} \subset A}}Y_A \;\in\; I. 
$
We can write:
 \begin{equation}
\label{eq:sigma_l_y_iy_j}
\sum_{A \in {\mathcal A} \atop{ \{i_1\} \subset A}}Y_A=\sum_{B \in {\mathcal A} \atop{ \{i_1,i_2\} \subset B}}Y_B+
\sum_{C \in {\mathcal A } \atop{ \{i_1,i_2\} \not\subset C, \{i_1\} \subset C }}Y_C \,,
 \end{equation}
and from (\ref{eq:sigma_l_y_iy_j}) we get 
$$
   y_{i_2}^{q-1}\sum_{A \in {\mathcal A}\atop{\{i_1\} \subset
       A}}Y_A=y_{i_2}^{q-1}\sum_{B \in {\mathcal A}\atop{\{i_1,i_2\}
       \subset B}}
Y_B+y_{i_2}^{q-1}
  \sum_{\{i_1,i_2\} \not\subset 
  C\atop{\{i_1\} \subset C,C \in {\mathcal A}}} Y_C \,.
$$
Since 
$y_{i_2}^{q-1}
\sum_{\{i_1,i_2\} \not\subset B\atop{\{i_1\} \subset B}}Y_B =y_{i_2}^{q-2}(y_{i_2}
\sum_{\{i_1,i_2\} \not\subset B\atop{\{i_1\} \subset B}}Y_B)$
and $y_{i_2}\sum_{\{i_1,i_2\} \not\subset B\atop{\{i_1\} \subset B}}Y_B$ is in $I$, 
then the former  is in $I$.\\
Similarly, $y_{i_2}^{q-1}\sum_{\{i_1,i_2\} \subset B}Y_B$ is in $I$, 
and by reduction w.r.t. $E_q[Y]$ we have
   \begin{equation}
     \label{eq:ultima}
     \sum_{B\in{\mathcal A} \atop{\{i_1,1_2\} \subset B}}Y_B  \;\in\; I \,.
   \end{equation}
In the same way, restarting from (\ref{eq:ultima}), 
we will eventually deduce that
 $
 \sum_{D\in {\mathcal A}\atop{L \subset D }}Y_D   
 $ is in $I$.
Since $|D|=|L|$, then $\sum_{L \subset D} Y_D =Y_L=y_{i_1}\cdots
y_{i_l}$, i.e. it is an element of $ {\mathcal M}_{m,l}$. But $L$ is a generic element of
 ${\mathcal A}$ and  $ {\mathcal M}_{m,l}=\{  Y_L \}_{L\in {\mathcal A} }$.
\end{proof}
 We now determine the reduced \GR\ basis  for $I_{m,t}$.

\begin{theorem}\label{baseidealepesi}
Let  $t\in \NN$ be such that $1\leq t\leq m$. 
Let  $G(I_{m,t})$  be the reduced \GR \ basis of $I_{m,t}$.  
Then:
 \begin{align*}
   G(I_{m,t})&=E_q[Y] \cup {\mathcal M}_{m,t}, & \mbox{ for } t \geq 2 \,,\\
G(I_{m,t})&=\{ y_1,..,,y_m\},                  & \mbox{ for } t =1     \,.
 \end{align*}
\end{theorem}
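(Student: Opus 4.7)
The plan is to first identify $I_{m,t}$ with the ideal $J := \langle E_q[Y] \cup \mathcal{M}_{m,t}\rangle$ when $t \geq 2$, and then check that $G := E_q[Y] \cup \mathcal{M}_{m,t}$ is a reduced \Gr\ basis of $J$ via the S-polynomial criterion. The inclusion $J \subseteq I_{m,t}$ follows from Lemma~\ref{mezzo_ideale} by downward induction on $l$: the base case is $\mathcal{M}_{m,m} = \{y_1 \cdots y_m\} = \{\sigma_m\} \subset I_{m,t}$, and each step $\mathcal{M}_{m,l} \subset \langle \mathcal{M}_{m,l+1} \cup \{\sigma_l\} \cup E_q[Y]\rangle \subset I_{m,t}$ is valid because $\sigma_l \in I_{m,t}$ whenever $l \geq t$; taking $l = t$ gives $\mathcal{M}_{m,t} \subset I_{m,t}$. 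Conversely, each $\sigma_j$ with $j \geq t$ is a sum of squarefree monomials of degree $j$, every one divisible by some element of $\mathcal{M}_{m,t}$, so $\sigma_j \in \langle \mathcal{M}_{m,t}\rangle$ and $I_{m,t} \subseteq J$.

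Next I would analyse the leading terms: that of $y_i^q - y_i$ is $y_i^q$, while each $Y_L \in \mathcal{M}_{m,t}$ is its own leading term. Reducedness of $G$ is then immediate: the $Y_L$'s are distinct squarefree monomials of common degree $t$, so none divides another; $y_i^q$ cannot divide a squarefree $Y_L$ (as $q \geq 2$); $Y_L$ cannot divide $y_j^q$ (as $t \geq 2$); and the tail $-y_i$ of $y_i^q - y_i$ has degree $1 < t$, so no leading term of $G$ divides it. For the \Gr-basis test proper I would distinguish three families of S-polynomials. Pairs of monomials in $\mathcal{M}_{m,t}$ produce zero S-polynomials; S-polynomials between two distinct field equations $y_i^q - y_i$, $y_j^q - y_j$ reduce to zero by Buchberger's coprime-leading-terms criterion; for $S(y_i^q - y_i, Y_L)$, if $i \notin L$ the leading terms are again coprime, while if $i \in L$ a short computation with $Y_L = y_i M$ yields $S(y_i^q - y_i, Y_L) = -y_i M = -Y_L$, which reduces to zero via $Y_L$ itself.

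Finally, for $t = 1$, the same downward induction still yields $\mathcal{M}_{m,1} = \{y_1, \ldots, y_m\} \subset I_{m,1}$, while the reverse inclusion $E_q[Y] \cup \{\sigma_1, \ldots, \sigma_m\} \subset \langle y_1, \ldots, y_m\rangle$ is obvious; hence $I_{m,1} = \langle y_1, \ldots, y_m\rangle$ with minimal reduced \Gr\ basis $\{y_1, \ldots, y_m\}$. I expect the only genuinely delicate step to be the downward inductive exploitation of Lemma~\ref{mezzo_ideale}; the S-polynomial bookkeeping is routine once the leading-term pattern has been isolated.
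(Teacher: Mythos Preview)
Your proposal is correct and follows essentially the same route as the paper: verify Buchberger's S-polynomial criterion on the candidate set $E_q[Y]\cup\mathcal{M}_{m,t}$. The paper's proof is a one-liner that leaves both the ideal equality $I_{m,t}=\langle E_q[Y]\cup\mathcal{M}_{m,t}\rangle$ (which needs Lemma~\ref{mezzo_ideale}) and the reducedness check implicit; you spell out both of these, plus the case analysis for the S-polynomials, so your version is strictly more detailed than the original.
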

\begin{proof}
  The statement is easily proved by checking Buchberger's criterion,
  that is, that all $S-$polynomials coming from $G(I_{m,t})$ are reduced to
  zero by reduction via $G(I_{m,t})$.
\end{proof}

%
An obvious consequence of Theorem \ref{baseidealepesi} is that
${\mathcal M}_{m,t}\subset I_{m,t}.
$
For any $1\leq i\leq m$, we denote by $P_i$ the set $P_i=\{ c \in (\FF_q)^m \ | \ \ww(c)=i \ \} $,
and by $Q_i$ the set $Q_i=\sqcup_{0\leq j\leq i} P_j$.
Set $P_i$ contains all vectors of weight $i$ and set $Q_i$ contains all vectors of
weight up to $i$.
In the next theorem we describe the \Gr\ basis of the vanishing ideal of $Q_i$.
\begin{theorem}\label{idealepesi}
Let $t$ be an integer such that $0\leq t\leq m-1$. 
Then 
$$
  {\mathcal I}(Q_t) \;=\; \langle\{\sigma_{t+1},\ldots,\sigma_m\}\cup E_q[Y]\rangle \;=\;I_{m,t+1} \;,
$$
hence the reduced \Gr\ basis $G$ of ${\mathcal I}(Q_t)$ is
\begin{align*}
   G & =E_q[Y] \cup {\mathcal M}_{m,t}, & \mbox{ for } t \geq 1 \,,\\
   G & =\{ y_1,\dots,y_m\},        & \mbox{ for } t =0     \,.
 \end{align*}
\end{theorem}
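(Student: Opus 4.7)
The statement has two parts: the ideal equality $\mathcal{I}(Q_t) = I_{m,t+1}$ and the description of the reduced \GR\ basis. The second equality in the displayed formula is the definition of $I_{m,t+1}$, and once the first equality is established the \GR\ basis description is just Theorem \ref{baseidealepesi} applied to $I_{m,t+1}$ (with the two cases $t+1 \geq 2$ and $t+1 = 1$ corresponding to $t \geq 1$ and $t = 0$). So the whole task reduces to proving $\mathcal{I}(Q_t) = I_{m,t+1}$.

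For the inclusion $I_{m,t+1} \subseteq \mathcal{I}(Q_t)$ I would simply check the generators vanish on $Q_t$. The polynomials $y_i^q - y_i$ vanish identically on $(\mathbb{F}_q)^m \supseteq Q_t$. For $j \geq t+1$ the polynomial $\sigma_j$ is a sum of squarefree monomials of degree $j$; evaluating at any $c \in Q_t$, each such monomial is a product of $j$ distinct coordinates of $c$, at least one of which must be zero because $c$ has at most $t < j$ nonzero entries. Hence $\sigma_j(c)=0$.

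For the reverse inclusion the cleanest route is a dimension count of the quotient as an $\mathbb{F}_q$-vector space. By Theorem \ref{baseidealepesi}, the standard monomials modulo $I_{m,t+1}$ are precisely the monomials $y_{i_1}^{a_1}\cdots y_{i_s}^{a_s}$ with $i_1 < \cdots < i_s$, $1 \leq a_j \leq q-1$, and $s \leq t$ (they must avoid the leading monomials $y_i^q$ and every squarefree product in $\mathcal{M}_{m,t+1}$). Grouping by the support size $s$,
$$
\dim_{\mathbb{F}_q}\bigl(\mathbb{F}_q[Y]/I_{m,t+1}\bigr) \;=\; \sum_{s=0}^{t}\binom{m}{s}(q-1)^s \;=\; |Q_t|,
$$
the last equality because a weight-$s$ vector is specified by choosing $s$ positions out of $m$ and a nonzero value in each. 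On the other side, since $Q_t \subseteq (\mathbb{F}_q)^m$ is a finite set of $\mathbb{F}_q$-rational points, the evaluation map $\mathbb{F}_q[Y]/\mathcal{I}(Q_t) \to \mathbb{F}_q^{Q_t}$ is an $\mathbb{F}_q$-algebra isomorphism (Lagrange interpolation on the points of $Q_t$ shows surjectivity, injectivity is by definition of $\mathcal{I}(Q_t)$), so $\dim_{\mathbb{F}_q}\bigl(\mathbb{F}_q[Y]/\mathcal{I}(Q_t)\bigr) = |Q_t|$. Combining these with the inclusion already proved forces equality.

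The one point that requires any care is identifying the set of standard monomials modulo $I_{m,t+1}$ and counting them against $|Q_t|$; everything else is routine. The trivial case $t = 0$ can be handled on the side: $Q_0 = \{0\}$, $\mathcal{I}(Q_0) = \langle y_1, \ldots, y_m\rangle$, and $I_{m,1}$ contains $\sigma_1 = y_1+\cdots+y_m$ which together with $E_q[Y]$ reduces to $\{y_1,\ldots,y_m\}$ as in Theorem \ref{baseidealepesi}.
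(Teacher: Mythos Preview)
Your argument is correct, but it follows a genuinely different route from the paper's. For the inclusion $I_{m,t+1}\subseteq\mathcal{I}(Q_t)$ both proofs are identical. For the reverse inclusion, the paper argues on the level of varieties: it shows directly that any $a\in\mathcal{V}(I_{m,t+1})$ has weight at most $t$, because if $\ww(a)=r\geq t+1$ then exactly one monomial of $\mathcal{M}_{m,r}$ survives at $a$, whence $\sigma_r(a)\neq 0$. Since $I_{m,t+1}\supseteq E_q[Y]$ is radical, variety equality then forces ideal equality. You instead feed the \GR\ basis of Theorem~\ref{baseidealepesi} back into the problem and count standard monomials: the bijection between monomials of support size $s\leq t$ with exponents in $\{1,\dots,q-1\}$ and vectors of weight $s$ gives $\dim_{\FF_q}\FF_q[Y]/I_{m,t+1}=|Q_t|=\dim_{\FF_q}\FF_q[Y]/\mathcal{I}(Q_t)$, so the already-established inclusion must be an equality.

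The paper's approach is more self-contained (it does not need Theorem~\ref{baseidealepesi} for the ideal equality, only for the subsequent \GR\ basis statement) and makes the combinatorial reason ``weight $\leq t$'' visible at the level of points. Your approach is slicker once the \GR\ basis is in hand: it avoids the implicit appeal to radicality and turns the whole thing into a linear-algebra dimension match, with the pleasant side effect that the counting identity $\sum_{s=0}^{t}\binom{m}{s}(q-1)^s=|Q_t|$ explains \emph{why} the leading terms $E_q[Y]\cup\mathcal{M}_{m,t+1}$ are exactly right.
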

\begin{proof}
It is enough to show both inclusions
 $$
   \langle  \{ \sigma_{t+1},\ldots,\sigma_m\}\cup E_q[Y] \rangle
   \subseteq {\mathcal I}(Q_t),\,\quad
  {\mathcal V}\big( \langle \{\sigma_{t+1},\ldots,\sigma_m\} \cup E_q[Y]\rangle\big)\subseteq 
  {\mathcal V}({\mathcal I}(Q_t)) \,.
$$
\indent
  For any $c$ in $Q_t$, we have $c \in (\FF_q)^m$ and 
  $\sigma_{t+1}(c)=0,\ldots,\sigma_m(c)=0$,
  hence $ \langle \{\sigma_{t+1},\ldots,\sigma_m\} \cup E_q[Y]\rangle \subseteq {\mathcal I}(Q_t)$.  

Since ${\mathcal V}({\mathcal I}(Q_t))=Q_t$, we need to show
  $$
  a\in {\mathcal V}(\langle \{ \sigma_{t+1},\ldots,\sigma_m\}\cup E_q[Y] \rangle) 
  \implies a\in Q_t \,. 
  $$
  We use the relation  $(\FF_q)^m= P_0 \sqcup P_1  \sqcup \ldots \sqcup P_m$ and we observe
  that \\${\mathcal V}(\langle \{\sigma_{t+1},\ldots,\sigma_m\} \cup E_q[Y]\rangle)\neq \emptyset$, 
  since it contains the zero vector.\\
  Suppose by contradiction that there is an $a \in P_{t+1} \sqcup P_{t+2}\sqcup\ldots\sqcup P_m$ 
  such that $\sigma_{t+1}(a)=0,\ldots,\sigma_m(a)=0$. 
  Since $a$ has weight $r$, with  $r\geq t+1$,
 it follows that there exists a monomial $\overline{\sf m}\in {\mathcal M}_{m,r}$ such that 
 $\overline{\sf m}(a)\neq 0$ and ${\sf m}(a)=0$ for any other monomial ${\sf m}$ in ${\mathcal M}_{m,r}$. 
 Hence $\sigma_r(a)\neq 0$, contradicting the hypothesis 
  $ a\in {\mathcal V}(\langle  \{\sigma_{t+1},\ldots,\sigma_m\} \cup E_q[Y] \rangle) $.
\end{proof}


\section{A distance-computing algorithm}
\label{algoMEGA}

In this section we propose a computational method to find the weight distribution of a code $C$ in $\cnk$.
We also extend this method to find the distance and the distance distribution of $C$.
From now on, $t$ will be understood to satisfy $1\leq t\leq n$ and $C$
will denote a code in $\cnk$.

\subsection{\Gr \  basis of a non-linear systematic code}
We apply our previous results to give a structure for the \GR\, basis of
$C$.
Let $n,k \in \mathbb{N}$ and 
 $\FF_q[X,Z]$ be as in Section \ref{prel}.
We also use $E_q[X]\subset \FF_q[X,Z]$ with the obvious meaning.

We can view $C$
as a set of points in $(\FF_q)^n\subset
({\overline{\FF}}_q)^n$ and hence as a $0$-dimensional variety, so that ${\mathcal I}(C)$ is 
its vanishing ideal in $\mathbb{F}_q[X,Z]$. 
We describe the reduced \GR\ basis of ${\mathcal I}(C)$ w.r.t. lex. 
\begin{theorem}
\label{base}
 Let  G be the reduced
 \GR\,  basis for $I={\mathcal I}(C)$ w.r.t. lex with 
$x_1 < \dots < x_k < z_1 < \dots <z_{n-k}$.
Then:
$$G=E_q[X] \cup \{ z_1-{\sf f}_1,\ldots,z_{n-k}-{\sf f}_{n-k}\} 
$$
for some ${\sf f}_j \in \FF_q[X]$, $1 \leq j \leq  n-k$.
\end{theorem}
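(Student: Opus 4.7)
The plan is to exploit the systematic structure: every $c\in C$ has the form $(v,\psi(v))$ with $v\in(\FF_q)^k$ and $\psi:(\FF_q)^k\to(\FF_q)^{n-k}$. Since every map from $(\FF_q)^k$ to $\FF_q$ is polynomial, I would first invoke finite-field Lagrange interpolation to produce polynomials ${\sf f}_1,\dots,{\sf f}_{n-k}\in\FF_q[X]$, each of degree at most $q-1$ in every $x_i$, such that ${\sf f}_j(v)=\psi_j(v)$ for every $v\in(\FF_q)^k$. With these ${\sf f}_j$ in hand I would set $G=E_q[X]\cup\{z_1-{\sf f}_1,\dots,z_{n-k}-{\sf f}_{n-k}\}$ as the candidate reduced \GR\ basis.

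Next I would check $\langle G\rangle={\mathcal I}(C)$. The inclusion $\langle G\rangle\subseteq {\mathcal I}(C)$ is immediate by evaluation: on a point $(v,\psi(v))$ the binomials $x_i^q-x_i$ vanish because $v_i\in\FF_q$, and $z_j-{\sf f}_j$ vanishes by construction. For the reverse inclusion, I would argue on the variety side together with a dimension count. First, ${\mathcal V}(\langle G\rangle)\subseteq C$: any zero $(a,b)$ of $G$ must satisfy $a\in(\FF_q)^k$ (from $E_q[X]$) and $b_j={\sf f}_j(a)=\psi_j(a)$, so $(a,b)\in C$; combined with the reverse inclusion this gives ${\mathcal V}(\langle G\rangle)=C$. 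To promote this to an ideal equality I would show that the quotient $\FF_q[X,Z]/\langle G\rangle$ has dimension $q^k=|C|$ as an $\FF_q$-vector space, matching $\dim_{\FF_q}\FF_q[X,Z]/{\mathcal I}(C)$; this forces the two ideals to coincide since one contains the other.

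The dimension count follows from the Gröbner analysis, which I would do next. Under the chosen lex order with $x_1<\dots<x_k<z_1<\dots<z_{n-k}$, the leading monomials of the elements of $G$ are $x_1^q,\dots,x_k^q,z_1,\dots,z_{n-k}$. These leading monomials are \emph{pairwise coprime}, so every $S$-polynomial coming from a pair in $G$ reduces to zero by the standard coprimeness criterion, and Buchberger's test confirms that $G$ is a \GR\ basis of $\langle G\rangle$. The standard monomials (those not divisible by any leading monomial) are exactly $\{x_1^{a_1}\cdots x_k^{a_k}\mid 0\leq a_i\leq q-1\}$, yielding $q^k$ basis elements as required. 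Reducedness then follows because each leading coefficient is $1$, the leading term $z_j$ of $z_j-{\sf f}_j$ does not divide any monomial occurring in another element (the ${\sf f}_l$ and $E_q[X]$ involve only $x$-variables), and the leading term $x_i^q$ of $x_i^q-x_i$ does not divide any monomial in any ${\sf f}_l$ precisely because we took ${\sf f}_l$ with $\deg_{x_i}({\sf f}_l)\leq q-1$.

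The only real obstacle is the step identifying $\langle G\rangle$ with ${\mathcal I}(C)$ rather than merely a subideal. The cleanest way around it is the dimension argument just sketched; an alternative would be to verify that $\langle G\rangle$ is radical (one can check $z_j^q-z_j\in\langle G\rangle$ using the Frobenius identity modulo $z_j-{\sf f}_j$ and $E_q[X]$, so all variables satisfy $y^q=y$ in the quotient) and then apply the Nullstellensatz together with ${\mathcal V}(\langle G\rangle)=C$. Either route avoids any elimination-theoretic computation and relies only on the systematic hypothesis and Theorem~\ref{baseidealepesi}-style reasoning.
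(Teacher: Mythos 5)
Your proof is correct, and the underlying insight is the same as the paper's: systematicity forces each $z_j$-coordinate to be a function of the $X$-coordinates (hence realized by some ${\sf f}_j\in\FF_q[X]$), while $C\subset(\FF_q)^n$ puts $E_q[X]$ into $\mathcal{I}(C)$. Where you part ways with the paper is in what actually gets verified. The paper's proof is two sentences: it asserts $E_q[X]\subset G$ and asserts the existence of the $z_j-{\sf f}_j$, and stops, leaving it to the reader to check that the displayed set really is the reduced Gröbner basis of the vanishing ideal. You carry out that check: you produce the ${\sf f}_j$ by finite-field interpolation with the degree bound $\deg_{x_i}{\sf f}_j\le q-1$ (which is exactly what makes reducedness go through), you verify the Gröbner property via pairwise-coprime leading monomials and Buchberger's first criterion, and you close the gap between $\langle G\rangle\subseteq\mathcal{I}(C)$ and equality either by matching $\FF_q$-dimensions ($q^k$ standard monomials versus $|C|=q^k$) or by the radicalness-plus-Nullstellensatz route. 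So the plan is the paper's, but yours is the complete argument; the dimension count (or the observation that $z_j^q-z_j\in\langle G\rangle$) is precisely the step the paper's phrasing sweeps under the rug, and it is good that you isolated it as the only genuine obstacle.
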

\begin{proof}
   Since $C \subset (\FF_q)^n$ and $C$ is systematic,
   $E_q[X]\subset G$. The existence of polynomials $z_i-{\sf f}_i$
 follows from the fact that any non-systematic component
   depends only on the $X$ block of variables.
\end{proof}
\begin{example}
Let $C$ be the following $(4,2,2)$ code: 
$$
  C=\{(0,0,0,1),(0,1,0,1),(1,0,0,1),(1,1,0,0)\} \,.
$$
The reduced \Gr \  basis of $\mathcal{I}(C) \subset \FF_2[x_1,x_2,z_1,z_2]$ 
w.r.t. the lex 
ordering with $x_1 <x_2<z_1<z_2$ is
$
G(C)=\{x_1^2+x_1,x_2^2+x_2,z_1,z_2+x_1x_2+1\} \,.
$
So ${\sf f}_1=0$ and ${\sf f}_2=x_1x_2+1$.
\end{example}
 
Any set of polynomials endowed with the structure of  Theorem \ref{base} is a
\GR\, basis for the ideal generated by itself, and such ideal is
zero-dimensional.  Hence, the corresponding variety is finite and
satisfies the properties of the codes in $\cnk$.  In this sense we
can identify polynomial sets (with the structure of the theorem) and codes in $\cnk$.  
We make it explicit in the following theorem.

\begin{theorem}
Let ${\mathcal A}_{k,n} $ be the set 
$$
{\mathcal A}_{k,n}=\{ ({\sf f}_1,\ldots,{\sf f}_{n-k}) \mid {\sf f}_j: (\mathbb{F}_q)^k  \longrightarrow \mathbb{F}_q,\, 1\leq j\leq n-k\}.
$$ 
There is a bijection ${\mathcal A}_{k,n} \leftrightarrow \cnk$ given by 
$$
  ({\sf f}_1,\ldots,{\sf f}_{n-k}) \longleftrightarrow  
  G= E_q[X]\cup \{ z_1-{\sf f}_1,\ldots, z_{n-k}-{\sf f}_{n-k}\} \,.
$$
\end{theorem}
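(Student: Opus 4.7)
The strategy is to exhibit explicit maps in both directions and check they are mutual inverses. The key identification is that every function $f\colon(\FF_q)^k\to\FF_q$ admits a \emph{unique} polynomial representative $\tilde f\in\FF_q[X]$ in normal form modulo $\langle E_q[X]\rangle$, i.e.\ of degree at most $q-1$ in each $x_i$. This is standard finite-field interpolation: the $q^k$ such monomials span a space of dimension $q^k$, which is precisely the dimension of the function space on $(\FF_q)^k$. So tuples in ${\mathcal A}_{k,n}$ correspond bijectively to tuples $(\tilde{\sf f}_1,\ldots,\tilde{\sf f}_{n-k})$ of $E_q[X]$-reduced polynomials, and the right-hand side $G$ of the claimed correspondence is well-defined.

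First I would define the forward map $\Psi\colon{\mathcal A}_{k,n}\to\cnk$ by
$$
\Psi({\sf f}_1,\ldots,{\sf f}_{n-k}) \;=\; \bigl\{(v,{\sf f}_1(v),\ldots,{\sf f}_{n-k}(v)) \mid v\in(\FF_q)^k\bigr\}.
$$
The associated encoder $\phi$ is injective and satisfies $\pi\circ\phi=\mathrm{id}$, so $\Psi({\sf f}_1,\ldots,{\sf f}_{n-k})\in\cnk$. The backward map $\Phi\colon\cnk\to{\mathcal A}_{k,n}$ is supplied directly by Theorem \ref{base}: take the functions represented by the polynomials $\tilde{\sf f}_1,\ldots,\tilde{\sf f}_{n-k}$ that appear in the reduced \GR\ basis of $\mathcal{I}(C)$.

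To verify that $\Phi$ and $\Psi$ are mutually inverse I would show that, for any $({\sf f}_1,\ldots,{\sf f}_{n-k})\in{\mathcal A}_{k,n}$ with polynomial representatives $\tilde{\sf f}_j$, the set
$$
H \;=\; E_q[X]\cup\{z_1-\tilde{\sf f}_1,\ldots,z_{n-k}-\tilde{\sf f}_{n-k}\}
$$
is precisely the reduced \GR\ basis of $\mathcal{I}(\Psi({\sf f}_1,\ldots,{\sf f}_{n-k}))$. Under the chosen lex order, the leading monomials of $H$ are $x_1^q,\ldots,x_k^q,z_1,\ldots,z_{n-k}$, which are pairwise coprime; hence by Buchberger's first criterion every $S$-polynomial reduces to zero, so $H$ is a \GR\ basis of the ideal $J=\langle H\rangle$. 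Each $\tilde{\sf f}_j$ is already in normal form with respect to $E_q[X]$, and no leading monomial in $H$ divides any other term, so $H$ is reduced. Finally $\mathcal{V}(J)$ is exactly $\Psi({\sf f}_1,\ldots,{\sf f}_{n-k})$: the equations $x_i^q=x_i$ force $x$-coordinates in $\FF_q$, and $z_j=\tilde{\sf f}_j(x)$ then determines the remaining coordinates. Since $J$ is radical (being $0$-dimensional with distinct points and containing $E_q[X]$), $\mathcal{I}(\Psi(\cdot))=J$ and $H$ is the reduced \GR\ basis. Thus $\Phi\circ\Psi=\mathrm{id}$; the equality $\Psi\circ\Phi=\mathrm{id}$ is immediate from Theorem \ref{base}, because evaluating $z_j-\tilde{\sf f}_j$ at a codeword forces its $(k+j)$-th coordinate to equal $\tilde{\sf f}_j$ of its first $k$ coordinates.

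The main obstacle is the mild bookkeeping in the uniqueness of the polynomial representative $\tilde{\sf f}$ of ${\sf f}$ modulo $\langle E_q[X]\rangle$ and the ensuing verification that $H$ is reduced; once this is in place, the Gröbner-basis properties of $H$ follow from the coprimality of leading monomials, and the variety computation is routine.
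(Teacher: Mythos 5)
Your proof is correct and follows essentially the same line the paper sketches in the paragraph preceding the theorem (the paper itself states the theorem without a formal proof, relying on the observation that any polynomial set with the structure of Theorem \ref{base} is a reduced \GR\ basis of a zero-dimensional ideal whose variety is a systematic code). Your write-up makes this precise: the unique normal-form representatives modulo $\langle E_q[X]\rangle$, the pairwise-coprime leading terms giving the \GR\ property via the product criterion, the reducedness of $H$, and the identification of $\mathcal{V}(\langle H\rangle)$ with the graph $\{(v,{\sf f}_1(v),\ldots,{\sf f}_{n-k}(v))\}$. The only point to state a touch more carefully is the radicality claim: it is cleanest to observe either that $z_j^q-z_j\in\langle H\rangle$ (since $z_j^q-z_j=(z_j-\tilde{\sf f}_j)^q-(z_j-\tilde{\sf f}_j)+(\tilde{\sf f}_j^q-\tilde{\sf f}_j)$ and the last term lies in $\langle E_q[X]\rangle$), so $\langle H\rangle$ contains all field equations, or that the number of standard monomials of $H$ equals $q^k=|\mathcal{V}(\langle H\rangle)|$, forcing $\langle H\rangle=\mathcal{I}(\mathcal{V}(\langle H\rangle))$. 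Either way the gap is minor and the argument as a whole is sound.
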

\subsection{Weight distribution  for non-linear systematic codes}

The first computational method we propose is a method to obtain the
weight distribution for $C$.
\begin{definition}
  Let $G(C)$ be the reduced \GR\  basis for $C$, 
  $G(C)= E_q[X] \cup \{z_1-{\sf f}_1(X),\dots, z_{n-k}-{\sf f}_{n-k}(X)  \}$.
  We denote by ${\mathcal W}_C^t$ the following ideal in $\FF_q[x_1,\ldots,x_k]$:
  $$
  {\mathcal W}_C^t=\langle E_q[X] \cup 
  \{ {\sf m}\bigl( x_1,\ldots,x_k,{\sf f}_1(X),\dots,{\sf f}_{n-k}(X) \bigr) \mid  
            {\sf m} \in {\mathcal M}_{n,t} \} \,.
  $$
\end{definition}
\begin{lemma}\label{pisa}
$
{\mathcal V}(  {\mathcal W}_C^t  )\neq \emptyset \iff \exists c\in C \ \ {\mbox  s.t. }\ \  \ww(c)\leq t-1. 
$
\end{lemma}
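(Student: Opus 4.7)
The plan is to unpack the definition of $\mathcal{V}(\mathcal{W}_C^t)$ and recognise it as the preimage under the encoding map $\phi$ of the set of codewords whose weight is at most $t-1$.

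First I would observe: since $E_q[X] \subset \mathcal{W}_C^t$, any point $a \in \mathcal{V}(\mathcal{W}_C^t)$ lies in $(\FF_q)^k$. For such an $a$, the defining relations of $\mathcal{W}_C^t$ say that every monomial $\mathsf{m} \in \mathcal{M}_{n,t}$ evaluates to zero on the vector
$$c \;=\; \bigl(a_1,\ldots,a_k,\mathsf{f}_1(a),\ldots,\mathsf{f}_{n-k}(a)\bigr).$$
By Theorem \ref{base} (and the bijection $\mathcal{A}_{k,n}\leftrightarrow\cnk$), $c$ is precisely the codeword $\phi(a) \in C$. Conversely, given any $c \in C$, systematicity yields $c = \phi(\pi(c))$, so $a := \pi(c) \in (\FF_q)^k$ satisfies $\phi(a)=c$, and for such an $a$ membership in $\mathcal{V}(\mathcal{W}_C^t)$ is equivalent to every monomial of $\mathcal{M}_{n,t}$ vanishing on $c$.

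So the lemma reduces to the combinatorial equivalence: for $c \in (\FF_q)^n$,
$$\mathsf{m}(c)=0 \text{ for every } \mathsf{m}\in\mathcal{M}_{n,t} \iff \ww(c)\leq t-1.$$
This is exactly the content of Theorem \ref{idealepesi} applied to $Q_{t-1}\subset(\FF_q)^n$: the set $Q_{t-1}$ of vectors of weight at most $t-1$ is the variety of $\langle\mathcal{M}_{n,t}\cup E_q[Y]\rangle$. (It can also be seen directly: if $\ww(c)=r\geq t$, then picking any $t$ nonzero coordinates of $c$ gives a monomial in $\mathcal{M}_{n,t}$ that is nonzero on $c$; conversely, if $\ww(c)\leq t-1$ then every squarefree degree-$t$ monomial must hit at least one zero coordinate.)

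Combining the two observations gives both directions: if $\mathcal{V}(\mathcal{W}_C^t)\neq\emptyset$, pick $a$ in it; then $\phi(a)\in C$ has weight $\leq t-1$. Conversely, if some $c\in C$ has weight $\leq t-1$, then $a=\pi(c)$ lies in $\mathcal{V}(\mathcal{W}_C^t)$. No step is really an obstacle: the only subtlety is to ensure one applies Theorem \ref{idealepesi} with the index shifted by one (the ideal in question uses $\mathcal{M}_{n,t}$, which captures weight strictly less than $t$, hence the ``$\leq t-1$'' in the statement).
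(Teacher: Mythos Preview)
Your proof is correct and follows essentially the same approach as the paper: pick $a\in\mathcal{V}(\mathcal{W}_C^t)$, build the codeword $c=(a,\mathsf{f}_1(a),\ldots,\mathsf{f}_{n-k}(a))\in C$, and invoke Theorem~\ref{idealepesi} to conclude $c\in Q_{t-1}$, with the converse obtained by reversing the argument. Your write-up is in fact slightly more explicit than the paper's (you spell out the role of $E_q[X]$, the systematicity via $\pi$, and the direct combinatorial reason why $\mathcal{M}_{n,t}$ cuts out $Q_{t-1}$), but there is no substantive difference in strategy.
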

\begin{proof}
  Let $a \in {\mathcal V}( {\mathcal W}_C^t)$,  $a=(a_1,\ldots,a_k)$.
 We have that $c$ is in $C$, where\\
  $c=(a_1,\ldots,a_k,{\sf f}_1(a),\ldots,  {\sf f}_{n-k}(a))$.
 Since $a \in {\mathcal V}( {\mathcal W}_C^t)$, we have
 $$
   {\sf m}(a)={\sf m}\bigl( a_1,\ldots,a_k,{\sf f}_1(a),\ldots,{\sf f}_{n-k}(a) \bigr)= 0 
   \quad \forall {\sf m} \in {\mathcal M}_{n,t} \,.
 $$
Which means
$c \in {\mathcal V}(I_{m,t})$, hence $c \in Q_{t-1}$ (Theorem \ref{idealepesi}), 
i.e. $\ww(c)\leq t-1$.

The converse can be easily proved by reversing our previous argument.
\end{proof}
Lemma \ref{pisa}  shows that a point in ${\mathcal V}(  {\mathcal W}_C^t  )$
matches a codeword $c$ in $C$ with $\ww(c)\leq t-1$, from which we can easily  derive the 
main result of this subsection.
\begin{theorem}
$
  B_{t-1}=|{\mathcal V}( {\mathcal W}_C^t)|\setminus |{\mathcal V}( {\mathcal W}_C^{t-1})| \,.
$
\end{theorem}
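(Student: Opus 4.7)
The plan is to upgrade Lemma \ref{pisa} from a mere non-emptiness statement to an actual bijection, from which the counting identity follows at once.

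First I would fix $t$ and define the evaluation map
$$
\Phi_t : {\mathcal V}({\mathcal W}_C^t) \longrightarrow \{c \in C \mid \ww(c) \leq t-1\}, \qquad
\Phi_t(a) = \bigl(a_1,\ldots,a_k,{\sf f}_1(a),\ldots,{\sf f}_{n-k}(a)\bigr).
$$
The argument given in the proof of Lemma \ref{pisa} already shows that $\Phi_t$ is well-defined (each image lies in $C$ and has weight at most $t-1$) and that its converse direction realises every low-weight codeword as an image: given $c \in C$ with $\ww(c) \leq t-1$, the vector $a = (c_1,\ldots,c_k)$ satisfies ${\sf m}(c) = 0$ for every ${\sf m} \in {\mathcal M}_{n,t}$ by Theorem \ref{idealepesi}, and since $c_{k+j} = {\sf f}_j(a)$ by the systematic shape of $G(C)$ from Theorem \ref{base}, we get $a \in {\mathcal V}({\mathcal W}_C^t)$ and $\Phi_t(a) = c$. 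Injectivity of $\Phi_t$ is immediate because $a$ is recovered from the first $k$ coordinates of $\Phi_t(a)$.

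Hence $|{\mathcal V}({\mathcal W}_C^t)| = |\{c \in C \mid \ww(c) \leq t-1\}| = B_0 + B_1 + \cdots + B_{t-1}$, and the analogous identity for $t-1$ yields $|{\mathcal V}({\mathcal W}_C^{t-1})| = B_0 + B_1 + \cdots + B_{t-2}$. Subtracting gives the claimed formula.

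I do not expect any serious obstacle: the work is essentially already contained in Lemma \ref{pisa}, and the only substantive observation is that, thanks to the systematic structure exposed in Theorem \ref{base}, the passage $a \leftrightarrow c$ is a genuine bijection rather than a many-to-one correspondence. The minor point to be careful about is that one should explicitly invoke Theorem \ref{idealepesi} to guarantee that a codeword of weight $\leq t-1$ really does kill every monomial of ${\mathcal M}_{n,t}$ (this is the direction going from codewords back to variety points, which is only implicit in Lemma \ref{pisa}).
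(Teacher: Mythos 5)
Your proof is correct and is essentially the same argument the paper intends: the paper gives no explicit proof, only the remark that the theorem follows "easily" from Lemma \ref{pisa}, and what you have done is make precise the bijection (via systematicity, which forces $a \mapsto c = (a,{\sf f}(a))$ to be injective with image exactly the low-weight codewords) that the paper leaves implicit. Your observation that the converse direction requires invoking Theorem \ref{idealepesi} is a genuine tightening of the exposition, not a deviation in method.
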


If $C$ is a distance-invariant code,  from the weight distribution of $C$ we can immediately get
its distance distribution. 

\subsection{Distance and distance distribution of non-linear systematic codes}
We now  propose a computational method to find the distance
 of $C$. 

\begin{definition}
Let ${\sf f}_1, \dots, {\sf f}_{n-k}$ be
as
in Theorem \ref{base}.
We denote by $\FF_q[X,\tilde X]$ the polynomial ring
$\FF_q[x_1,x_2,\ldots,x_k,\tilde{x}_1,\tilde{x}_2,\ldots,\tilde{x}_k]$.
In the polynomial module $\left( \FF_q[X,\tilde X] \right)^n$, we denote
by $L_{n,k,t}$ the   polynomial vector:
$$
  L_{n,k,t}= \left(x_1-\tilde{x}_1, \ldots, x_k-\tilde{x}_k,   
              {\sf f}_1(X)- {\sf f}_1(\tilde X), \ldots,
            {\sf f}_{n-k}(X)- {\sf f}_{n-k}(\tilde X) \right).
$$
\end{definition}
\begin{definition}\label{def_ict}
Let  ${\sf f}_1, \dots, {\sf f}_{n-k}$ be as
in Theorem \ref{base}.
We denote by 
${\mathcal I}_C^t$ the ideal in $\FF_q[X, \tilde X]$ generated by :
$$
  \{ x_i^q-x_i,\tilde{x_i}^q-\tilde{x_i} \mid 1\leq i\leq k \} \cup
  \{ {\sf m} (L_{n,k,t} ) \mid 
      {\sf m} \in\mathcal{M}_{n,t} \} \,.
$$
\end{definition}
In $(\FF_q)^k× (\FF_q)^k$ we denote by $\Delta_k $ the 
diagonal, i.e. the set of points
$a=(a_1,\ldots,a_k,\tilde{a_1},\ldots,\tilde{a_k})$
 such that $a_i=\tilde{a}_i$, $ 1\leq i\leq k$.
 
We need to take into account the diagonal because, clearly,
$\Delta_k \subset {\mathcal    V}({\mathcal I}_C^t)$ (note that $\Delta_k={\mathcal V}({\mathcal I}_C^1)$).
\begin{theorem}\label{mega1}
$
{\mathcal V}({\mathcal I}_C^t)\neq \Delta_k\iff \exists\, c_1,\,
c_2\in C \,{\mbox{such that}}\,\,  \dd(c_1,c_2)\leq t-1 \,.
$
\end{theorem}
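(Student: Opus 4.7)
The plan is to read points in $\mathcal{V}(\mathcal{I}_C^t)$ as pairs of codewords and then invoke Theorem~\ref{idealepesi} to translate the monomial conditions into a Hamming-weight bound. The generators $x_i^q-x_i$ and $\tilde x_i^q-\tilde x_i$ force any point of $\mathcal{V}(\mathcal{I}_C^t)$ to lie in $(\mathbb{F}_q)^k\times(\mathbb{F}_q)^k$, so such a point $(a,\tilde a)$ determines two codewords
$$
  c_1=(a_1,\dots,a_k,{\sf f}_1(a),\dots,{\sf f}_{n-k}(a)),\qquad
  c_2=(\tilde a_1,\dots,\tilde a_k,{\sf f}_1(\tilde a),\dots,{\sf f}_{n-k}(\tilde a))
$$
of $C$ via the systematic encoding of Theorem~\ref{base}. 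The first step of the proof is the identification
$$
  L_{n,k,t}(a,\tilde a) \;=\; c_1-c_2 \;\in\; (\mathbb{F}_q)^n,
$$
so that the remaining generators $\{{\sf m}(L_{n,k,t})\mid {\sf m}\in\mathcal{M}_{n,t}\}$ vanish at $(a,\tilde a)$ exactly when every monomial of $\mathcal{M}_{n,t}$ vanishes at the vector $c_1-c_2$.

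Next I would apply Theorem~\ref{idealepesi} in the ring $\mathbb{F}_q[y_1,\dots,y_n]$ with the parameter $t$ of that theorem set to $t-1$: its Gr\"obner basis is $E_q[Y]\cup\mathcal{M}_{n,t}$ (for $t\geq 2$; the case $t=1$ reduces to the diagonal itself, as already noted after Definition~\ref{def_ict}). Hence
$$
  \{{\sf m}(c_1-c_2)=0 \ \forall {\sf m}\in\mathcal{M}_{n,t}\}
  \iff c_1-c_2\in Q_{t-1}
  \iff \ww(c_1-c_2)\leq t-1
  \iff \dd(c_1,c_2)\leq t-1.
$$
Together with the observation of the previous paragraph, this shows that $(a,\tilde a)\in\mathcal{V}(\mathcal{I}_C^t)$ if and only if the associated codeword pair $(c_1,c_2)$ satisfies $\dd(c_1,c_2)\leq t-1$.

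It remains to relate ``$\mathcal{V}(\mathcal{I}_C^t)\neq\Delta_k$'' to ``$\exists c_1\neq c_2$ at distance $\leq t-1$''. Since $C$ is systematic, the map $a\mapsto c=(a,{\sf f}_1(a),\dots,{\sf f}_{n-k}(a))$ is a bijection between $(\mathbb{F}_q)^k$ and $C$; consequently $a=\tilde a$ iff $c_1=c_2$, so $(a,\tilde a)\notin\Delta_k$ iff $c_1\neq c_2$. For the forward direction, pick $(a,\tilde a)\in\mathcal{V}(\mathcal{I}_C^t)\setminus\Delta_k$ and use the equivalence above; for the converse, given $c_1\neq c_2$ in $C$ with $\dd(c_1,c_2)\leq t-1$, set $a=\pi(c_1)$, $\tilde a=\pi(c_2)$ and run the equivalence in reverse to obtain a point of $\mathcal{V}(\mathcal{I}_C^t)$ outside $\Delta_k$.

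The main obstacle, and really the only non-routine step, is the clean identification $L_{n,k,t}(a,\tilde a)=c_1-c_2$: once that is in place, Theorem~\ref{idealepesi} does all the work of converting a system of polynomial equations into a Hamming-weight bound. Care is only needed in the boundary case $t=1$, where $\mathcal{M}_{n,1}=\{y_1,\dots,y_n\}$ forces $c_1-c_2=0$ and hence $\mathcal{V}(\mathcal{I}_C^1)=\Delta_k$, consistent with the non-existence of pairs at distance $\leq 0$.
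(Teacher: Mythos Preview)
Your argument is correct and follows essentially the same route as the paper: identify a point $(a,\tilde a)\in\mathcal{V}(\mathcal{I}_C^t)$ with a pair of codewords via the systematic encoding, observe that $L_{n,k,t}(a,\tilde a)=c_1-c_2$, and then use the characterization of $Q_{t-1}$ by the vanishing of $\mathcal{M}_{n,t}$ to convert the polynomial conditions into the bound $\ww(c_1-c_2)\leq t-1$. If anything, your write-up is slightly more careful than the paper's: you cite Theorem~\ref{idealepesi} for the weight characterization (the paper cites Lemma~\ref{pisa}, which strictly speaking concerns $\mathcal{W}_C^t$), and you spell out both directions and the boundary case $t=1$ explicitly.
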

\begin{proof}
Let $(a,{\tilde a}) \in (\FF_q)^k\times (\FF_q)^k \setminus \Delta_k$.
We take two codewords  $c,\,  {\tilde c}$ such that: 
$$
c=(a_1,...,a_k,{\sf f}_1(a),\ldots,{\sf f}_{n-k}(a)),\quad
{\tilde c}=({\tilde a}_1,\ldots,{\tilde a}_k,{\sf f}_1({\tilde a}),\ldots,{\sf f}_{n-k}({\tilde a}))\,.
$$
From this, we obtain that
$$
  (a,{\tilde a}) \in {\mathcal V}({\mathcal I}_C^t)\neq \Delta_k 
  \iff {\sf m}(c-{\tilde c})=0, \quad
  c\neq {\tilde c} ,\, {\sf m} \in {\mathcal M}_{n,t}\; ,
$$
since $c-{\tilde c}= (a_1-{\tilde a}_1,...,a_k-{\tilde a}_k,
{\sf f}_1(a)-{\sf f}_1({\tilde a}),...,{\sf f}_{n-k}(a)-{\sf f}_{n-k}({\tilde a}))$.
But ${\sf m}(c-{\tilde c})=0 $ $\forall {\sf m} \in {\mathcal M}_{n,t} $
means $c-{\tilde c}\in {\mathcal V}(I_{n,t})$, which is
equivalent to $\ww(c-{\tilde c})\leq t-1 $ (Lemma \ref{pisa}).
\end{proof}
We can easily derive  the following result.
\begin{corollary}\label{mega}
$
{\mathcal V}({\mathcal I}_C^t)= \Delta_k\iff \dd(C)\geq t.
$
\end{corollary}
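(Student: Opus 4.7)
The plan is to obtain this corollary as an immediate contrapositive of Theorem \ref{mega1}, with only a small amount of unpacking of the definition of minimum distance.

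First I would spell out what $d(C) \geq t$ means: by definition it says that every pair of distinct codewords $c_1, c_2 \in C$ satisfies $d(c_1, c_2) \geq t$. Equivalently, there do not exist distinct $c_1, c_2 \in C$ with $d(c_1, c_2) \leq t-1$. Next I would observe that the right-hand side of Theorem \ref{mega1} really refers to \emph{distinct} codeword pairs: this is explicit in the proof, which picks a point $(a,\tilde a)$ outside $\Delta_k$ and uses it to build two codewords $c \neq \tilde c$. So the equivalence of Theorem \ref{mega1} reads, without ambiguity, $\mathcal{V}(\mathcal{I}_C^t) \neq \Delta_k \iff \exists\, c_1 \neq c_2 \in C \text{ with } d(c_1,c_2) \leq t-1$.

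Then I would simply negate both sides. The negation of the left-hand side uses the already-noted inclusion $\Delta_k \subseteq \mathcal{V}(\mathcal{I}_C^t)$ (remarked just before Theorem \ref{mega1}), so $\mathcal{V}(\mathcal{I}_C^t) \neq \Delta_k$ is the same as $\Delta_k \subsetneq \mathcal{V}(\mathcal{I}_C^t)$, whose negation is exactly $\mathcal{V}(\mathcal{I}_C^t) = \Delta_k$. The negation of the right-hand side is precisely $d(C) \geq t$ by the reformulation above. Combining these gives the corollary.

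There is no real obstacle here; the only point requiring a moment of care is ensuring that $\mathcal{V}(\mathcal{I}_C^t) \supseteq \Delta_k$ always holds, so that taking contrapositives of Theorem \ref{mega1} yields the clean statement $\mathcal{V}(\mathcal{I}_C^t) = \Delta_k$ (rather than some set-theoretic complication). This is already established in the paper, so the proof should be essentially one line invoking Theorem \ref{mega1}.
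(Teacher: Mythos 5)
Your proof is correct and follows exactly the route the paper intends: the paper gives no explicit proof of Corollary \ref{mega}, merely noting "We can easily derive" it from Theorem \ref{mega1}, and your contrapositive argument (together with the already-established inclusion $\Delta_k \subseteq \mathcal{V}(\mathcal{I}_C^t)$) is precisely that easy derivation.
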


From Corollary \ref{mega}, an algorithm is directly designed to compute
the distance of   $C $.

\begin{center}

\newenvironment{tab}[1]%
{\begin{tabular}{|#1|}\hline}%
{\hline\end{tabular}}

\begin{tab}{l}



$\,\,$ $j=2$\\

\noindent

{\bf While} \,\, ${\mathcal V}({\mathcal I}_C^j) =\Delta_k $ \,\,\,{\bf do}\\

$\, \, \, $ $ j:=j+1$;\\

\noindent
{\bf Output}  $j-1$\\
\end{tab}
\end{center}
\begin{example}
  Let $C=\{[0, 0, 2, 0], [0, 1, 0, 0]$, $[0, 2, 0, 2], [1, 0, 2, 2]$, $[1, 1, 2, 1]$, $[1, 2, 1, 2], $[2, 0, 1, 0], [2, 1, 1, 1]$, [2, 2, 0, 1]\}$
be a $(4,2,3)$ code. 
The \Gr \ basis $G(C)$ $\in$ $\FF_3[x_1,x_2,z_1,z_2]$ w.r.t. lex
$x_1 < x_2 <z_1<z_2$ is
$$
G(C)=\{x_1^3 - x_1, x_2^3 - x_2, z_1 - x_2^2 + x_1^2x_2 - x_1^2 + x_1 + 1, z_2 - x_2^2 + x_1x_2 + x_2 - x_1^2 - x_1\}.$$
So ${\sf f}_1= x_2^2 - x_1^2x_2 + x_1^2 - x_1 - 1$ and 
${\sf f}_2= x_2^2 - x_1x_2 - x_2 + x_1^2 + x_1$. We choose as monomial
order in $\FF_3[x_1,x_2,\tilde{x}_1,\tilde{x}_2]$ the degrevlex order
with $x_1>x_2>\tilde{x}_1>\tilde{x}_2$.
Computing the  \GR\, basis of $\mathcal{I}_C^2$, we have  
 $G(\mathcal{I}_C^2)=\{\tilde{x}_1 - x_1, \tilde{x}_2 - x_2, x_1^3 - x_1, x_2^3 - x_2,\tilde{x}_1^3 - x_1, \tilde{x}_2^3- x_2\}$.
Since $\tilde{x}_1 - x_1, \tilde{x}_2 - x_2\in G$, then  
$\mathcal{V}(\mathcal{I}_C^2)$ is the diagonal, so  $d \geq 2$. 
Computing then
$ G(\mathcal{I}_C^3)$, we find  
\small{$$G(\mathcal{I}_C^3)=\{ \tilde{x}_2^2-\tilde{x}_2,
 \tilde{x}_1^2-\tilde{x}_1,
 x_2^2-x_2,
 x_1^2-x_1,
 x_1\tilde{x}_1\tilde{x}_2-x_1\tilde{x}_2,
 x_1x_2\tilde{x}_2-x_2\tilde{x}_1\tilde{x}_2,
 x_1x_2\tilde{x}_1-x_2\tilde{x}_1\}\,.
$$}
This time $\tilde{x}_1-x_1\not\in G(\mathcal{I}_C^3) $  and the
variety $\mathcal{V}(\mathcal{I}_C^3)  \neq \Delta_3$, which implies $d=2$.
\end{example}

Theorem \ref{mega1} suggests a way to compute the distance
distribution of   $C$. 
Indeed, we showed that a point in ${\mathcal V}(\mathcal{I}_C^t)$ is a
pair of codewords with  distance less than $t$.
From that, we have the following.
\begin{corollary}
$
{\mathcal V}({\mathcal I}_C^t)=\{(c_1,c_2) \mid c_1\neq c_2 \in C, \ \dd(c_1,c_2)\leq t-1 \} 
\cup \Delta_k \,.
$
\end{corollary}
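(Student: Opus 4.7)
The plan is to refine the argument of Theorem \ref{mega1}, which already established the qualitative equivalence, so as to track the exact correspondence between points of the variety and pairs of codewords. First I would observe that since $x_i^q - x_i$ and $\tilde{x}_i^q - \tilde{x}_i$ belong to ${\mathcal I}_C^t$, every point of ${\mathcal V}({\mathcal I}_C^t)$ already lies in $(\FF_q)^{2k}$. Because $C$ is systematic, the map
$$
(a,\tilde a) \longmapsto (c,\tilde c), \quad c=(a_1,\dots,a_k,{\sf f}_1(a),\dots,{\sf f}_{n-k}(a)),
$$
and similarly for $\tilde c$, is a bijection between $(\FF_q)^{2k}$ and $C\times C$ (by Theorem \ref{base}), under which $\Delta_k$ corresponds exactly to the ``diagonal'' pairs $(c,c)$.

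Next I would invoke the computation inside the proof of Theorem \ref{mega1}: the condition $(a,\tilde a) \in {\mathcal V}({\mathcal I}_C^t)$ is equivalent to ${\sf m}(c-\tilde c)=0$ for all ${\sf m}\in \mathcal{M}_{n,t}$, i.e.\ to $c-\tilde c \in {\mathcal V}(I_{n,t})$. By Theorem \ref{idealepesi} this is in turn equivalent to $c-\tilde c \in Q_{t-1}$, that is to $\ww(c-\tilde c)\leq t-1$, or equivalently $\dd(c,\tilde c)\leq t-1$.

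To finish, I would split into two cases. If $c=\tilde c$, then by the systematic property $a=\tilde a$, so $(a,\tilde a)\in\Delta_k$; if $c\ne \tilde c$, the pair $(c,\tilde c)$ belongs to the first set on the right-hand side. This gives the inclusion $\subseteq$. Conversely, $\Delta_k \subset {\mathcal V}({\mathcal I}_C^t)$ was already noted before Theorem \ref{mega1}, and for any pair $(c_1,c_2)\in C\times C$ with $c_1\ne c_2$ and $\dd(c_1,c_2)\leq t-1$, the point $(\pi(c_1),\pi(c_2))$ satisfies the monomial relations (again by Theorem \ref{idealepesi}), so it sits in ${\mathcal V}({\mathcal I}_C^t)$, giving $\supseteq$.

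There is no real obstacle here: the corollary is essentially a bookkeeping sharpening of Theorem \ref{mega1}, and the only subtlety is the implicit identification, via the systematic encoding, of the codeword pair $(c_1,c_2)$ with its information-part preimage $(\pi(c_1),\pi(c_2))\in (\FF_q)^{2k}$, which is exactly the natural domain of ${\mathcal V}({\mathcal I}_C^t)$.
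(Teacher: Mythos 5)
Your proof is correct and follows exactly the route the paper intends: the corollary is presented there with only the remark that it follows from Theorem \ref{mega1}, because a point of ${\mathcal V}(\mathcal{I}_C^t)$ is identified with a codeword pair at distance less than $t$, which is precisely the bookkeeping refinement you carry out. Your explicit citation of Theorem \ref{idealepesi} for the equivalence $\ww(c-\tilde c)\leq t-1 \iff c-\tilde c\in{\mathcal V}(I_{n,t})$ is in fact cleaner than the paper's own reference (inside the proof of Theorem \ref{mega1}) to Lemma \ref{pisa}, which formally concerns ${\mathcal W}_C^t$ rather than $I_{n,t}$, and your remark about the implicit identification via $\pi$ between information-part preimages in $(\FF_q)^{2k}$ and pairs of codewords is exactly the notational abuse the statement relies on.
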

Let $c_1, c_2 \in C$ be such that $\dd(c_1,c_2)= i $, for $1\leq i \leq t-1$. 
In ${\mathcal V}(\mathcal{I}_C^t)$ there is both the point that matches the
pair of words $(c_1,c_2)$ and the point that matches the pair $(c_2,c_1 )$.
The following fact is then obvious.
\begin{fact} \label{fact1}
$$
 \sum_{1\leq i\leq t-1} A_i \;=\; \frac{|{\mathcal V}( \mathcal{I}_C^t)| \setminus
|\Delta_k| }{2}  \;.
$$
\end{fact}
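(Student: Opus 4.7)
The plan is to read off Fact \ref{fact1} directly from the preceding Corollary together with a simple ordered-versus-unordered pair count. The Corollary gives
$$
{\mathcal V}({\mathcal I}_C^t) \;=\; \{(c_1,c_2)\mid c_1\neq c_2\in C,\ \dd(c_1,c_2)\leq t-1\}\;\cup\;\Delta_k\,,
$$
where I interpret a codeword pair $(c_1,c_2)$ as the corresponding point $(a,\tilde a)\in(\FF_q)^k\times(\FF_q)^k$, with $a=\pi(c_1)$ and $\tilde a=\pi(c_2)$; by systematicity this identification is a bijection, so $c_1\neq c_2$ corresponds to $a\neq\tilde a$. Thus the two sets in the union lie respectively off and on the diagonal, so the union is disjoint and cardinalities add.

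First I would take cardinalities on both sides, obtaining
$$
|{\mathcal V}({\mathcal I}_C^t)| \;=\; |\{(c_1,c_2)\mid c_1\neq c_2\in C,\ \dd(c_1,c_2)\leq t-1\}| \;+\; |\Delta_k|\,.
$$
Next I would count the ordered pairs on the right. By definition, $A_i$ is the number of \emph{unordered} codeword pairs at distance exactly $i$; each such unordered pair $\{c_1,c_2\}$ with $c_1\neq c_2$ yields exactly the two ordered pairs $(c_1,c_2)$ and $(c_2,c_1)$, which correspond to two distinct points of ${\mathcal V}({\mathcal I}_C^t)\setminus\Delta_k$. Partitioning the ordered pairs with $1\leq \dd\leq t-1$ by their distance therefore gives
$$
|\{(c_1,c_2)\mid c_1\neq c_2\in C,\ \dd(c_1,c_2)\leq t-1\}| \;=\; 2\sum_{1\leq i\leq t-1} A_i\,.
$$

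Substituting this into the previous identity and solving for the sum yields the claimed formula. There is no real obstacle here: the only subtlety is the identification between codeword pairs and points of $(\FF_q)^k\times(\FF_q)^k$, which is an immediate consequence of $C$ being systematic and was already implicit in the proof of Theorem \ref{mega1}.
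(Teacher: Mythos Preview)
Your proposal is correct and follows exactly the reasoning the paper has in mind: the paper states the fact as ``obvious'' after observing that for each pair $c_1\neq c_2$ at distance $\leq t-1$ both ordered pairs $(c_1,c_2)$ and $(c_2,c_1)$ appear in ${\mathcal V}({\mathcal I}_C^t)$, and your argument simply makes this count explicit using the preceding Corollary.
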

\begin{example}
\label{distr_dist}
Let $C=\{[0, 0, 0, 0]$, $[0, 1, 0, 0], [0, 2, 0, 0]$, $[1, 0, 0, 0], [1, 1, 0, 2]$, $[1, 2, 0, 2], [2, 0, 0, 2]$, $[2, 1, 0, 0], [2, 2, 0, 0]\}$. Clearly,
$C$ is in $\C(4,2,3)$. 
The distance distribution of $C$ can be determined by hand: $A_1=8,\,A_2=20,\,A_3=8,\,A_4=0$.
We want to compute all pairs of words $(c_1,c_2)$ with $\dd(c_1,c_2) \leq 2$. \\
To accomplish this, we start from the input basis of ideal $\mathcal{I}_C^3$:
\begin{itemize}
\item[] \small{$\mathcal{I}_C^3=\langle
x_2^2\tilde{x}_1^2\tilde{x}_2 - x_1^2\tilde{x}_1^2\tilde{x}_2 - x_1\tilde{x}_1^2\tilde{x}_2 - \tilde{x}_1^2\tilde{x}_2 - x_1^2x_2^2\tilde{x}_1\tilde{x}_2 + x_2^2\tilde{x}_1\tilde{x}_2 + x_1^2\tilde{x}_1\tilde{x}_2 - \tilde{x}_1\tilde{x}_2 - x_1^2x_2^2\tilde{x}_2 - x_1^2\tilde{x}_2 + x_1\tilde{x}_2 + x_1^2x_2\tilde{x}_1^2+ x_1x_2\tilde{x}_1^2- x_1^2x_2 - x_1x_2,
x_1^2x_2\tilde{x}_1^2\tilde{x}_2 + x_1x_2\tilde{x}_1^2\tilde{x}_2 - x_1^2x_2\tilde{x}_2 - x_1x_2\tilde{x}_2 - x_1^2x_2^2\tilde{x}_1^2- x_1x_2^2\tilde{x}_1^2+ x_1^2x_2^2+ x_1x_2^2, \,\,\,\,
\tilde{x}_2^3 - \tilde{x}_2, \,\,\,
x_2^3 - x_2, \,\,\,\,
\tilde{x}_1^3 - \tilde{x}_1, 
x_1^3 - x_1,  
x_2\tilde{x}_1^2\tilde{x}_2^2+ x_2\tilde{x}_1\tilde{x}_2^2- x_1^2x_2\tilde{x}_2^2- x_1x_2\tilde{x}_2^2- x_1^2\tilde{x}_1^2\tilde{x}_2 - x_1\tilde{x}_1^2\tilde{x}_2 - \tilde{x}_1^2\tilde{x}_2 - x_1^2x_2^2\tilde{x}_1\tilde{x}_2 + x_1^2\tilde{x}_1\tilde{x}_2 - \tilde{x}_1\tilde{x}_2 + x_1x_2^2\tilde{x}_2 - x_1^2\tilde{x}_2 + x_1\tilde{x}_2 + x_1^2x_2\tilde{x}_1^2+ x_1x_2\tilde{x}_1^2- x_1^2x_2 - x_1x_2,
x_1\tilde{x}_1^2\tilde{x}_2^2- \tilde{x}_1^2\tilde{x}_2^2+ x_2^2\tilde{x}_1\tilde{x}_2^2+ x_1^2\tilde{x}_1\tilde{x}_2^2- \tilde{x}_1\tilde{x}_2^2- x_1x_2^2\tilde{x}_2^2+ x_1^2\tilde{x}_2^2- x_1\tilde{x}_2^2- x_2\tilde{x}_1^2\tilde{x}_2 - x_2\tilde{x}_1\tilde{x}_2 + x_1^2x_2\tilde{x}_2 + x_1x_2\tilde{x}_2 - x_1x_2^2\tilde{x}_1^2- x_2^2\tilde{x}_1^2- x_1^2x_2^2\tilde{x}_1 + x_2^2\tilde{x}_1 + x_1^2x_2^2+ x_1x_2^2,
x_1x_2\tilde{x}_1\tilde{x}_2^2+ x_2\tilde{x}_1\tilde{x}_2^2- x_1^2x_2\tilde{x}_2^2- x_1x_2\tilde{x}_2^2- x_1^2\tilde{x}_1^2\tilde{x}_2 + \tilde{x}_1^2\tilde{x}_2 - x_1^2x_2^2\tilde{x}_1\tilde{x}_2 - x_1x_2^2\tilde{x}_1\tilde{x}_2 - x_1^2\tilde{x}_1\tilde{x}_2 + \tilde{x}_1\tilde{x}_2 + x_1^2x_2^2\tilde{x}_2 + x_1x_2^2\tilde{x}_2 + x_1^2x_2\tilde{x}_1^2- x_2\tilde{x}_1^2- x_1^2x_2\tilde{x}_1 + x_2\tilde{x}_1
\rangle$}
\end{itemize}
\noindent
From the \Gr\ basis we  find that $| \mathcal{V}({\mathcal{I}}_C^3)| = 65$
(\cite{CGC-cd-alg-vdim}).
Since
$| \Delta_2| = 9$, we have $A_1 + A_2 = (65 - 9)/2 = 28$, in
accordance with known values.
\end{example}
We can easily get an explicit formula for a generic $A_t$, since $|{\mathcal V}(\mathcal{I}_C^{t})|=2 \sum_1^{t-1} A_i+|\Delta_k|$.
\begin{theorem}
\label{formula_At}
$$
A_t=\frac{|{\mathcal V}(\mathcal{I}_C^{t+1})| \setminus
|{\mathcal V}(\mathcal{I}_C^t)|}{2}
$$
\end{theorem}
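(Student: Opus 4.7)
The plan is to derive the formula as an immediate bookkeeping consequence of the Corollary preceding Fact \ref{fact1} (the one describing $\mathcal{V}(\mathcal{I}_C^t)$ explicitly), together with the symmetry observation that makes Fact \ref{fact1} work. There is essentially no new content beyond what has already been proved; the formula is just the telescoping difference of the partial sums in Fact \ref{fact1}.

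First I would recall that by the Corollary,
\[
  \mathcal{V}(\mathcal{I}_C^s) \;=\; \{(c_1,c_2)\mid c_1\neq c_2\in C,\ \mathrm{d}(c_1,c_2)\leq s-1\}\cup \Delta_k
\]
for every admissible $s$, and in particular that the varieties are nested: $\mathcal{V}(\mathcal{I}_C^t)\subseteq \mathcal{V}(\mathcal{I}_C^{t+1})$, since a pair of distinct codewords at distance at most $t-1$ is also at distance at most $t$. Because of this nesting, the set-difference notation $|\mathcal{V}(\mathcal{I}_C^{t+1})|\setminus|\mathcal{V}(\mathcal{I}_C^t)|$ used in the statement can be read as the ordinary cardinality difference $|\mathcal{V}(\mathcal{I}_C^{t+1})|-|\mathcal{V}(\mathcal{I}_C^t)|$.

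Next I would count. The diagonal contributes $|\Delta_k|$ points to each variety, and each unordered pair $\{c_1,c_2\}$ of distinct codewords at Hamming distance exactly $i$ contributes precisely the two ordered pairs $(c_1,c_2)$ and $(c_2,c_1)$. Hence
\[
  |\mathcal{V}(\mathcal{I}_C^s)| \;=\; |\Delta_k| + 2\sum_{i=1}^{s-1} A_i,
\]
which is exactly Fact \ref{fact1}.  Subtracting the $s=t$ version from the $s=t+1$ version, the term $|\Delta_k|$ and the partial sum up to $t-1$ cancel, leaving $|\mathcal{V}(\mathcal{I}_C^{t+1})|-|\mathcal{V}(\mathcal{I}_C^{t})|=2A_t$, from which the formula follows upon dividing by two.

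The only potential obstacle is cosmetic: clarifying the abuse of notation in the statement (the symbol $\setminus$ between cardinalities) and justifying it by the nesting of the varieties. Beyond that, no new algebraic or geometric input is needed, as all of the heavy lifting has been done in Theorem \ref{mega1} and its Corollary.
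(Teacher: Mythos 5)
Your argument is exactly the paper's: the authors introduce the theorem with the sentence ``We can easily get an explicit formula for a generic $A_t$, since $|{\mathcal V}(\mathcal{I}_C^{t})|=2 \sum_1^{t-1} A_i+|\Delta_k|$,'' i.e.\ they derive the formula by the same telescoping of Fact~\ref{fact1} that you spell out. Your additional remark that the nesting of the varieties lets one read the $\setminus$ between cardinalities as an ordinary difference is a sensible clarification of the paper's notational abuse, but it introduces no new content.
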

We provide a last example.
\begin{example}
For the same code of Example \ref{distr_dist}, we find $A_2$.
\begin{itemize}
\item[]\small{$\mathcal{I}_C^2=\langle
\tilde{x}_1\tilde{x}_2 - x_1\tilde{x}_2 - x_2\tilde{x}_1 + x_1x_2, \tilde{x}_2^3 - \tilde{x}_2, x_2^3 - x_2, x_1^3 - x_1, x_1\tilde{x}_2^2 - x_1x_2^2, x_1x_2^2\tilde{x}_2 - x_1\tilde{x}_2   , \tilde{x}_1^2 - x_2^2\tilde{x}_1 - \tilde{x}_1 + x_1x_2^2 - x_1^2 + x_1, x_1x_2^2\tilde{x}_1 - x_1^2\tilde{x}_1 + x_1\tilde{x}_1 - x_1^2x_2^2 - x_1^2 + x_1, x_1^2x_2\tilde{x}_1 + x_1x_2\tilde{x}_1 - x_1^2x_2 - x_1x_2
\rangle$}
\end{itemize}
\noindent
From the \Gr\ basis we obtain that $| \mathcal{V}({\mathcal{I}}_C^2)| = 25$. 
Applying Theorem {\ref{formula_At}},
 we can thus conclude  $A_2  = (65 - 25)/2=20$.
\end{example}

%


\section{Complexity considerations}
\label{complexity2}
The complexity of our method (Corollary \ref{mega}) is basically the complexity of the 
computation of
 a \GR\ basis for  ${\mathcal I}_C^t$, $2\leq t \leq d+1$, which is the core of the algorithm. The ideal requiring a bigger computational effort is obviously
$\mathcal{I}_C^t$ with the largest $t$ that we have to compute. 
 From now on, we try to  consider  the ideal 
$\mathcal{I}_C^t$ where
the ${\sf f}$ polynomials are as generic as possible.

 To compute the distance, we do not need the exact structure of the \Gr\ basis, but we just want to know if ${\mathcal V}
({\mathcal I}_C^t)$ is the diagonal.
We experimentally found that, if 
${\mathcal V}({\mathcal I}_C^t)$ is the diagonal, 
then the  basis obtained by auto-reductions is the diagonal as well.
Thus,  in our case  the complexity of the \Gr\ computation 
seems to be equivalent\footnote{They give different results, except
  when they output the diagonal.} to that of the computation of an
inter-reduced  basis  from the proposed generator set for $\mathcal{I}_C^t$.
 Hence, we focus on the complexity estimation for the relevant inter-reductions. %
\\
\\
\indent
We want to estimate the complexity to get an inter-reduced basis, starting  from the input 
basis of generators for ${\mathcal I}_C^t$. 
We will use some experimental results.
Let $L$ be a set of polynomials that we want to inter-reduce. Let $N=|L|$. A single reduction involves an element $f$ of $L$ that is reduced w.r.t. $L \setminus \{
f \}$. The output is either $0$, in which case $L$ becomes $L \setminus \{ f \}$ (and $N$ becomes $N-1$), or $f$ itself, if it cannot be reduced, or a new polynomial $f'$, which is the remainder and which replaces $f$ in $L$.
So, in the worst case, any reduction requires a division by $N-1$ polynomials and generates a new polynomial $f'$. Note that the $L$ size cannot increase. The starting step requires then, in the worst case, $\frac{N(N-1)}{2}$ divisions.
Let $r$  be the number of new polynomials which are obtained by reductions, both at the starting step and at any other subsequent step. Since $N$ does not grow, any of these $r$ new polynomials has to be divided by no more than $N-1$ polynomials.
In conclusion, the worst case estimate gives
 $
\frac{N(N-1)}{2}+rN
$ 
 divisions, and $2^n n$ is clearly the (worst case) cost of any division.\\
In our case, $N=\binom{n}{k}=\binom{2k}{k}$. By induction, we can
derive the  following lemma:
\begin{lemma}\label{ilaria}
 Let  $n,k\in\mathbb{N}$, with $1\leq k\leq n$, and $\alpha\in \mathbb{Q}$. If $s(\alpha)$ is such that $1+2^{\alpha}\leq 2^{s(\alpha)}$, then: 
$$
{n \choose k} 2^{\alpha k}\leq 2^{s(\alpha) n}\,.
$$
\end{lemma}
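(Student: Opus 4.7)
The plan is to deduce the bound from the binomial theorem, together with monotonicity of the exponential. Explicitly, I would apply
\[
(1+2^{\alpha})^n \;=\; \sum_{j=0}^{n} \binom{n}{j} (2^{\alpha})^j \;=\; \sum_{j=0}^{n} \binom{n}{j} 2^{\alpha j},
\]
and observe that since $2^\alpha > 0$ every summand is nonnegative. Hence the single term with index $j=k$ is bounded above by the whole sum:
\[
\binom{n}{k} 2^{\alpha k} \;\leq\; (1+2^{\alpha})^n.
\]
Now invoke the hypothesis $1+2^{\alpha} \leq 2^{s(\alpha)}$ and raise both sides to the $n$-th power (both sides are positive, so monotonicity is preserved) to conclude
\[
(1+2^{\alpha})^n \;\leq\; (2^{s(\alpha)})^n \;=\; 2^{s(\alpha)\,n},
\]
which chained with the previous inequality gives exactly $\binom{n}{k} 2^{\alpha k} \leq 2^{s(\alpha) n}$.

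If one prefers the induction suggested in the text, the alternative route is induction on $n$, with the inner variable $k$ ranging over $1 \leq k \leq n$ at each stage. The base case $n=1$ (so $k=1$) reduces to $2^{\alpha} \leq 2^{s(\alpha)}$, which is immediate from $2^{\alpha} < 1+2^{\alpha} \leq 2^{s(\alpha)}$. For the inductive step, one applies Pascal's identity
\[
\binom{n+1}{k} = \binom{n}{k} + \binom{n}{k-1},
\]
multiplies by $2^{\alpha k}$, and rewrites the right-hand side as $\binom{n}{k}2^{\alpha k} + 2^{\alpha}\binom{n}{k-1}2^{\alpha(k-1)}$. The inductive hypothesis bounds each of these two summands by $2^{s(\alpha) n}$, so the sum is at most $(1+2^{\alpha})\,2^{s(\alpha) n} \leq 2^{s(\alpha)}\cdot 2^{s(\alpha) n} = 2^{s(\alpha)(n+1)}$, as required (boundary cases $k=1$ and $k=n+1$ are handled by the usual convention $\binom{n}{0}=1$ and $\binom{n}{n+1}=0$).

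There is no real obstacle; both arguments are routine once the key observation, namely that $\binom{n}{k}2^{\alpha k}$ is a single term of the binomial expansion of $(1+2^{\alpha})^n$, is in place. I would favor the binomial-theorem presentation for clarity, and only fall back on the inductive version if the author intends to reuse the same scheme elsewhere in the complexity discussion.
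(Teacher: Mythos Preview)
Your proposal is correct. The paper itself does not write out a proof but only indicates ``by induction'', which matches your second argument via Pascal's identity. Your primary argument via the binomial theorem is in fact the cleaner route: recognising $\binom{n}{k}2^{\alpha k}$ as a single nonnegative term in the expansion of $(1+2^{\alpha})^n$ gives the result in two lines and avoids the bookkeeping of boundary cases ($k=1$, $k=n+1$, and the implicit $k=0$ term needed to feed the induction). The induction buys nothing extra here --- the statement is not used recursively elsewhere in the complexity section --- so the binomial-theorem version is the one to keep.
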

\noindent
We denote by $N$ the number of polynomials 
 in $$ S_{gen}=
  \{ x_i^q-x_i,\tilde{x_i}^q-\tilde{x_i} \mid 1\leq i\leq k \} \cup
  \{ {\sf m} (L_{n,k,t} ) \mid 
      {\sf m} \in\mathcal{M}_{n,t} \} \,.
$$
We treat the hard case, i.e. when $n=2k$ and $d \simeq k$.
 Despite our worst case formula, our computations suggest that it is possible to assume:
\begin{itemize}
\item $N$ initial divisions instead of $\frac{N(N-1)}{2}$,
\item $r= \sqrt{2^{2k}}=2^k$  ($2^{2k}$ is the number of all possible monomials),
\item $2^k n$ instead of $2^n n$, the average cost of each division,
\item $\sqrt{N}$ the average number of divisions for any new polynomial $f'$.
\end{itemize}
Thus, the computational cost {\sf C} to produce an inter-reduced basis from $S_{gen}$ 
can be estimated in
\begin{align*}
{\sf C}=\left( \binom{2k}{k}+ 2^k \sqrt{\binom{2k}{k}} \right) 2^k n &=\left[ \binom{2k}{k} 2^k
+2^{\frac{3}{2}k} \sqrt{\binom{2k}{k}2^k} \right]2k\\
&=\left[ 2^{3k}+2^{\frac{3}{2}k} 2^{\frac{3}{2}k} \right] 2k=(2^{3k}+2^{3k})2k = 2^{3k+1}2k,
\end{align*}
where we applied Lemma \ref{ilaria} with $\alpha=1$ and  $s(\alpha)=1.5$\footnote{actually $s(\alpha)$ is slightly larger.}.
We would like to compare our estimates with our numerical results. In order to do that we highlight the asymptotic exponential behaviour in $k$.
\begin{definition}
 Let $f,g:{\mathbb R} \mapsto {\mathbb R}$ s.t. $f(m)>0,\,g(m)>0$ for $m \geq 1$.
We say that $f \simeq g $ if and only if there exist $m_1, m_2, \alpha_1, \alpha_2 \, \geq 1$ such that
$$
  f(m) \leq g(m)m^{\alpha_1} \quad m \geq m_1,\,\qquad
  g(m)\leq f(m)m^{\alpha_2} \quad m \geq m_2.
$$
\end{definition}
It is easy to see that
  $\simeq$ is an equivalence relation and, given  
   $f \simeq 2^{\alpha n}$  and
$g \simeq 2^{\beta n}$, for $\alpha,\,\beta \geq 1, \, \alpha,\beta
\in \RR$, then  $f\simeq g$ if and only if $\alpha = \beta$. Moreover $\alpha > \beta$ implies $\lim_{n\to \infty} \frac{g}{f}=0$.

With this notation, our previous estimates for  ${\sf C}$ can be written as ${\sf
  C}\simeq 2^{3k}$.\\
\indent
We tested our problem using different computer algebra systems (Magma
2.10.13 \cite{CGC-MAGMA}, Polybori 0.3.1  \cite{CGC-poli},
Singular 2.0.6, Singular 3.0.4 \cite{CGC-GPS07}). 
For any system the time needed has a behavior of kind $2^{\alpha k}$
, with $\alpha$ depending on the system.
In particular, we report the graph where $x= k$ and $y= \log(\frac{time\, in\, k}{time\, in\, k-1})$, so that the $y$ values represent the expected exponent.
\begin{figure}
    \centering
     {\includegraphics[width=8cm,height=6cm]{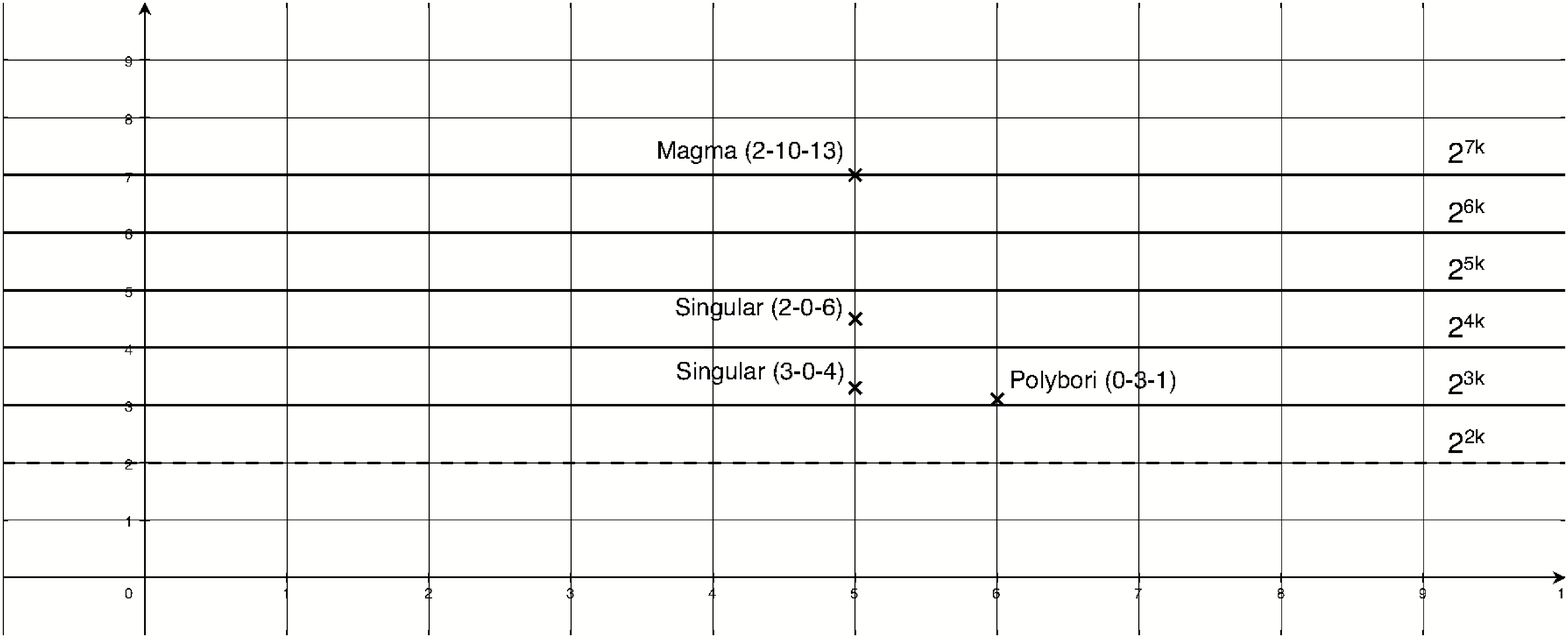}}
    \end{figure}
%
 This suggests us the following values for the computational costs:
 $2^{7k}$ for Magma 2.10.13, 
 $2^{4.5k} $ for Singular 2.0.6,
$ 2^{3.5k} $ for Singular 3.0.4,
 $2^{3k}$ for Polybori  0.3.1.
We note that a brute-force check of the distance has an asymptotic
behaviour like $2^{2k}$. 
Unfortunately, it is impossible to find an algorithm\footnote{The
  situation in the linear case is totally different. The problem of
  finding the distance is NP-complete (\cite{CGC-cd-art-vardy97}), so no sub-exponential algorithm
is known, but it might exist.} that computes the closest pairs
of a code with a lower complexity, since
$2^{2k}$ is indeed the complexity of the problem
(see Section \ref{pair}).
admittedly,
$2^{2k}$ looks   better than our estimate $2^{3k}$, but  two comments
are in order:
 \begin{itemize} 
\item If we examine the drastic improvement obtained by the evolution of computer algebra systems (from $2^{7k}$ to $ 2^{3k}$ in few years), we can reasonably assume that our estimates are still pessimistic and that further improvements in software development will allow our method to run like $2^{\alpha k}$, with $2< \alpha < 3$, possibly close to $\alpha =2$.
\item The brute-force check can output the distance only if the  
  code is given. When  a family of codes is given, that is, the
  $\{\rm{f}_i\}$ depend on some parameters $\{\lambda_j\}_{j\in J}$,
  the check is inapplicable, but
  our approach may be able to give general results for the family,
  such as a lower bound on the distance independent of the
  $\lambda_j$'s or even a ``a priori'' bound dependent on the  $\lambda_j$'s. 
\end{itemize}

\subsection{Complexity of the closest-pair problems}
\label{pair}
The determination of the minimum distance of a non-linear code is an instance
of the more general ``closest pair problem'', which is studied for
general metric spaces and, more deeply, for the Euclidean spaces $\RR^n$.
Several related problems are studied in this context, such as the
``nearest neighbour problem'' (which is the ``decoding problem'' in 
coding theory). An excellent reference is \cite{CGC-alg-book-prepshamos85},
to which we implicitly refer when we do not give explicit definitions
or quotations.

From now on, let $S$ be the number of points we are considering.
In the breakthrough 1975 paper \cite{CGC-alg-art-shamos75} the complexity
of many of these problems was established for the planar Euclidean case 
(i.e., for $\RR^2$).
In particular, it was shown that it is possible to solve the closest-pair
problem in $S \log(S)$ steps, by a clever application of the Voronoi diagrams.
And it was also shown that $S \log(S)$ matches exactly the complexity
of the problem. Since $S^2$ is the complexity of the 
``naive approach\footnote{this is the way the ``brute-force check'' is called in computational geometry.}'',
one might think that the naive approach could be beaten also in other metric
spaces.
Later, more refined algorithms have appeared that solve 
the closest-pair problem over $\RR^n$ with a claimed $S\log(S)$ complexity.
However, these algorithms' complexity is actually $S\log(S)f(n)$ for
some function $f:\NN \rightarrow \NN$, that is,
their complexity is computed by considering the space dimension {\em{fixed}}.
The fact here is that computational geometers are interested in cases
when the number of points is much larger than the space dimension.
In the Hamming space $(\FF_2)^n$ it is crucial to write explicitly also the
dependence on $n$, since clearly $S\leq 2^n$.
A deep analysis of their proofs shows that $f$ is exponential
in $n$ (e.g., about $4^n$ in Suri's divide-and-conquer algorithm
         \cite{CGC-alg-prep-suri}).
Translated into Hamming space language, this means that the naive algorithm
performs no worse than the others.
In fact, even if we allow for $f=2^n$, we will get an overall complexity
of $S\log(S)2^n=2^k \cdot k \cdot 2^n\sim 2^{3k}$ ($2^{2k}$ is the complexity 
of the naive approach). 
Of course, we would expect that the dependence on the dimension 
could be improved.
\begin{remark}
The instance in $(\FF_2)^n$ of the closest-pair problem is actually 
different from the ``distance computation problem'', since the latter
needs only to output the value of the distance and not the closest pair.
However, both our \Gr\ basis algorithm and the naive approach do output
the closest pair/s, and so we will consider the former from now on.
\end{remark}

When not dealing with Euclidean spaces (or metric spaces embeddable
with an isometry into $\RR^n$), the standard model in the literature for these
problems is the ``black box with distance oracle''. Basically it means
that the complexity is computed in terms of the number of distance calculations
which are necessary (a ``call to the distance oracle'' means a distance
calculation). As an example, it is possible to adapt the proof of 
Theorem 2.3 in \cite{CGC-alg-art-krauthgamee05} to prove the following
\begin{theorem} \label{Dd}
The (worst-case) complexity of decoding non-linear codes in $(\FF_2)^n$ is 
$\Omega(S)$.
\end{theorem}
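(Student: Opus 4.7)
The plan is to adapt the adversary argument of Theorem~2.3 in \cite{CGC-alg-art-krauthgamee05} to the Hamming setting. Let $A$ be any deterministic algorithm that decodes a size-$S$ code in $(\FF_2)^n$ using only calls to a Hamming-distance oracle; I will exhibit an adversary that forces $A$ to issue $\Omega(S)$ queries in the worst case, so that no algorithm can beat the naive check by more than constants.

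First I would set up the instance family. Fix the received word $r = 0 \in (\FF_2)^n$ with $n$ chosen sufficiently large (polynomial in $S$ is ample). Label the $S$ codewords abstractly by $\{1,\dots,S\}$, and for each $j \in \{1,\dots,S\}$ consider a realisation $C^{(j)}$ in which label $j$ is a point at Hamming distance $1$ from $r$ (hence the unique nearest neighbour) while every other label is placed at a common \emph{far} distance from $r$. The realisations are chosen so that in each $C^{(j)}$ only two pairwise Hamming distances occur, namely close-vs-far $(=a)$ and far-vs-far $(=b)$.

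Next I would describe the adversary strategy. It maintains a \emph{live set} $L \subseteq \{1,\dots,S\}$ of labels still consistent with being the unique close codeword, initialised as $L = \{1,\dots,S\}$. On a query $\dd(r,i)$ it answers the far distance and removes $i$ from $L$; on a query $\dd(i,j)$ with $i \neq j$ it answers $b$ and removes both $i$ and $j$ from $L$. By induction on the number $q$ of queries, the transcript is simultaneously consistent with $C^{(k)}$ for every $k \in L$, and each oracle call decreases $|L|$ by at most $2$, so $|L| \geq S - 2q$. If $q < (S-1)/2$ then $|L|$ contains at least two labels; whatever label $j^\ast$ the algorithm outputs, the adversary commits to the true instance being some $C^{(j')}$ with $j' \in L \setminus \{j^\ast\}$, whose unique nearest neighbour is $j' \neq j^\ast$. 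Hence $A$ must be wrong on some instance unless $q \geq (S-1)/2 = \Omega(S)$.

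The step I expect to be the main obstacle is the realisability of the prescribed two-value distance pattern: for every $j$ and every subset of labels queried, the uniform answers $a$ and $b$ must actually be honoured by honest points of $(\FF_2)^n$. Taking $n$ polynomial in $S$ and placing the far codewords on a pairwise-equidistant configuration at Hamming distance $n/2$ (for example rows of a Hadamard-type construction embedded in $(\FF_2)^n$) with a weight-one perturbation as the close codeword should make all pairwise distances consistent across the whole family. Once this geometric packing is in place the rest is the standard decision-tree adversary bound.
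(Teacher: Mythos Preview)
Your proposal is correct and shares the paper's core idea --- hide a single ``close'' codeword among many ``far'' ones so that the algorithm cannot locate it without $\Omega(S)$ oracle calls --- but the two implementations differ. The paper simply takes $X_n$ to be the entire Hamming sphere of radius $\ell=\lfloor n/2\rfloor$ about $q_n=\mathbf{0}$ together with one weight-$1$ point $P$, and argues informally that every query $\dd(q_n,x)$ returns $\ell$ until $P$ is hit; it does not explicitly treat inter-codeword queries $\dd(x,y)$, relying tacitly on the symmetry of the sphere. Your adaptive adversary with a live set, backed by an \emph{equidistant} configuration of far points, handles both query types uniformly: each oracle call prunes the live set by at most two, forcing at least $(S-1)/2$ calls. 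Your version is therefore more explicit and arguably more complete; the cost is the realisability step you flag, but that step goes through as you indicate, since a normalised Hadamard matrix of order $n$ (any power of two with $n\geq S+1$) supplies $S$ rows of weight $n/2$ that are pairwise at Hamming distance $n/2$ and all share a $0$ in the first coordinate, so a fixed weight-$1$ close point $e_1$ sits at the single distance $n/2+1$ from each of them.
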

Before we prove it, we need to explain the meaning of a ``proof''.
We are assuming that there is an algorithm $\mathcal{A}$,
which accepts as input both a non-linear code $X$ of size $S$ and a query $q$,
i.e. a point for which we  must find the point/s in $X$ closest to $q$.
Algorithm $\mathcal{A}$ is the best possible algorithm.
It calls the distance oracle many times and tries to use the information
on the computed distances to avoid making other distance computations.
What ${\mathcal{A}}$ can use is the triangle inequality:
\begin{equation} \label{triangle}
\dd(x,y) \leq \dd(x,z) + \dd(y,z), \quad \dd(x,z) \geq \dd(x,y) - \dd(y,z)
\end{equation}
since the algorithm is searching a distance minimum, it will use
the right-hand version of (\ref{triangle}). To prove Theorem \ref{Dd}
we need to exhibit, for any $n$ sufficiently large, a code of length $n$
and a query $q$ such that ${\mathcal{A}}$ is forced to perform $\Omega(S)$
distance computations.
\begin{proof}
Let $\ell=\lfloor\frac{n}{2}\rfloor$.
Let $X_n$ be the code in $(\FF_2)^n$ for $n\geq 4$ with query
$q_n$
$$ 
   X_n = \{ c\in (\FF_2)^n \mid \ww(c)=\ell \} \cup \{P=(1,0,\ldots,0)\}, \quad
   q_n = \{(0,\ldots,0) \}.
$$
In other words, $q_n$ is the zero vector and $X_n$ is the sphere centered in 
$q_n$ of radius $\ell$ plus a point $P$ at distance $1$ from $q_n$.
The crux of the proof here is that ${\mathcal{A}}$ will always get
$\ell$ from its distance computations, except for $\dd(q_n,P)$, which
will give $1$. So, it does not matter how smart ${\mathcal{A}}$ is, it
will {\em not able} to use any of its former distance computations.
Therefore, in the worst case, 
${\mathcal{A}}$ has to try all $\dd(q_n,x)$ for $x\in X$.
\end{proof}
Note that the size of $X_n$ in the above proof grows
with $n$ (actually $|X_n|>2^{n/2}$). 
Without this property, we could take $X_n$ with two points and 
then of course any algorithm will need to perform $1=S-1=\Omega(S)$ 
distance computations.

The previous digression is important in our opinion to put into context
the problem and understand the proof of our last result in this section,
which is the following theorem.
\begin{theorem}\label{paircl}
The (worst-case) complexity of computing the closest codeword pairs of 
a binary non-linear code is $\Omega(S^2)$.
\end{theorem}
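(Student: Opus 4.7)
The plan is to extend the adversary argument used in Theorem~\ref{Dd} from the one-query ``many indistinguishable points'' setup to a many-query ``many indistinguishable codes'' setup. Specifically, for each sufficiently large $n$ I would exhibit a family of $\Theta(S^2)$ codes in $(\FF_2)^n$, each with its own unique closest pair, such that any single oracle query returns the same ``default'' value on all but one of them; then any algorithm making fewer than $\binom{S}{2}-1$ queries to the distance oracle leaves at least two of these codes still compatible with its answers, so it cannot commit to the correct closest pair.

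For the construction I would take $n=2S$ coordinates split into $S$ blocks of length~$2$ and let the base code consist of the $S$ block indicators $c_1,\dots,c_S$, which are pairwise at distance $D=4$. For each unordered pair $\{p,q\}\subset\{1,\dots,S\}$ define $X^{(p,q)}$ by replacing $c_q$ with $c_q'$, where $c_q'$ is obtained from $c_q$ by moving one of its two $1$'s from the $q$-block into the $p$-block. A direct computation gives $\dd(c_p,c_q')=D-2$ and $\dd(c_k,c_q')=D$ for every $k\neq p,q$, so in $X^{(p,q)}$ the unique closest pair is $(c_p,c_q')$ while every other pairwise distance equals $D$.

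Next I would run the adversary. Let $\mathcal{F}=\{X^{(p,q)}\}$ and maintain $\mathcal{F}_{\mathrm{cmp}}\subseteq\mathcal{F}$, the subset of codes still compatible with every answer returned so far, initially all of $\mathcal{F}$. To each query $\dd(c_a,c_b)$ the adversary replies $D$, which is the true value in every $X^{(p,q)}$ with $\{p,q\}\neq\{a,b\}$, and hence eliminates at most the single code $X^{(a,b)}$ from $\mathcal{F}_{\mathrm{cmp}}$. So after $Q$ queries $|\mathcal{F}_{\mathrm{cmp}}|\geq\binom{S}{2}-Q$; if $Q<\binom{S}{2}-1$, at least two codes survive and their unique closest pairs differ, so whatever pair the algorithm outputs the adversary can commit to a surviving code whose closest pair contradicts that output. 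This forces $Q\geq\binom{S}{2}-1=\Omega(S^2)$.

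The step I expect to be the delicate one is the construction itself. The parity identity $\dd(u,v)+\dd(v,w)+\dd(u,w)\equiv 0\pmod{2}$ in $(\FF_2)^n$ rules out a single-bit perturbation that would create a distance gap of only~$1$, so one must shift two bits in a correlated way while leaving the distance from the perturbed codeword to every \emph{other} base codeword untouched. The ``import a $1$ from the $q$-block into the $p$-block of $c_q$'' trick is tailored precisely to satisfy this constraint; once the construction is verified, the adversary argument closes the bound mechanically.
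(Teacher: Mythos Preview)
Your proof is correct and takes a genuinely different route from the paper's. The paper fixes a \emph{single} code $X_n\subset(\FF_2)^n$ whose aspect ratio satisfies $D/d<2$ and observes that on such an input the triangle inequality can only certify $\dd(x,z)\geq D-d<d$ for an un-queried pair, which never rules $(x,z)$ out as a closest pair; hence every pair must be queried. You instead run an explicit adversary over a \emph{family} $\{X^{(p,q)}\}$ of $\binom{S}{2}$ codes, each identical to an equidistant base code except at one designated pair, and let the adversary return the default distance $4$ to every query, eliminating exactly one candidate code per query. The paper's argument is shorter and conceptually clean (small aspect ratio makes triangle-inequality pruning vacuous), but as a lower-bound proof it leans on the informal premise that the triangle inequality is the \emph{only} inference rule available in the black-box model. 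Your construction makes indistinguishability explicit at the level of query transcripts, which is the standard way to force $\Omega(S^2)$ queries and sidesteps that premise entirely. Both constructions yield codes of size only polynomial in $n$ (your $n=2S$; the paper acknowledges the same limitation right after its proof and points to Ramsey-type arguments for dense codes). One cosmetic point: since you index by \emph{unordered} $\{p,q\}$ but then single out $c_q$ for modification, you should fix once which element plays the role of $q$ (e.g.\ the larger index); this does not affect the count or the argument.
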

\begin{proof}
For any subset $Y$ of $(\FF_2)^n$ we denote by $D$ its diameter,
that is, the maximum distance between two points in $Y$, and by $d$ its
(Hamming) distance.
Let ${\mathcal{B}}$ be any algorithm having as input a non-linear code
and returning a pair of closest codewords (we can think of
${\mathcal{B}}$ as the best possible).
To prove our claim we need to show that for any sufficiently large $n$ we can
find a code $X_n$ such that ${\mathcal{B}}$ cannot use any distance
computations already performed in order to discard other distance computations.

We consider $n\geq 10$. We take $X_n$ as any maximal subset of $(\FF_2)^n$
such that its {\em aspect ratio} $\frac{D}{d}$ is strictly lower
than $2$. The bound obtainable by ${\mathcal{B}}$ from any of its
former computations is at most
$$
  \dd(x,z) \geq \max {\dd(x,y)} - \min{\dd(y,z)} = D-d < 2d-d = d \,.
$$
This cannot give any help to ${\mathcal{B}}$ and so ${\mathcal{B}}$
is forced to compute also $\dd(x,z)$.
\end{proof}

We would like to give one final remark. The size of code $X_n$ 
in the previous proof clearly grows with $n$, but 
it will have only a few words.
If one is interested in building a larger code on which ${\mathcal{B}}$
needs to compute $\Omega(S^2)$, then one will need to use Ramsey-like
properties of the Hamming space (\cite{CGC-alg-art-bartal05}), similarly
to what is done in \cite{CGC-alg-art-krauthgamee05}.
To say it in a few words avoiding technicalities, it is possible
to find a subset $Y$ of $(\FF_2)^n$ with a size $2^{\Omega(n)}$ 
and such that its aspect ratio is limited to $2+\epsilon$, for some
small $\epsilon$.
Since  the number of minimum-distance word pairs and the number
of maximum-distance word pairs may be ``small'' compared to the number
of all word pairs, it follows that ${\mathcal{B}}$ may have to examine
in the worst case a number $\Omega(S^2)$ of word pairs before being
able to use (\ref{triangle}). However, we feel that a complete
proof for this claim is out of the scope of this paper and
 so we do not delve into it.

\section{Conclusions}
\label{conc}
The decoding performance of a distance-invariant code (e.g., a linear code) depends on its weight 
distribution,  although already the distance gives partial information on it.
For a generic non-linear code the performance depends on its distance
distribution (but the distance provides significant information).
\\
In this paper we have provided some \Gr\ techniques in order to compute the above-mentioned
code parameters in the systematic case (which is the most interesting).
We realize that no method can be faster than running  a specific C-programme optimized
for a given code and we are far from claiming that our techniques can compete with this
approach.
However, the optimized software has two drawbacks:
\begin{itemize}
\item the software writing and debugging can be long and complex, (while our methods are very easy to implement using a software package
for \Gr\ basis computations),
\item the software programme can be used only on a given code and cannot give general
results on a code family (while our methods could).
\end{itemize}
In our opinion this means that our methods can be of interest for a mathematician
investigating  theoretical code properties.

\section*{Acknowledgements}
\label{ack}
Part of these results can be found in
\cite{CGC-tesi2-guerrini} and \cite{CGC-cd-phdthesis-ele} and have been 
presented at MEGA2005 and Linz D1 2006 (\cite{CGC-cd-inbook-D1guerrini}), 
which was a workshop within the Special Semester 
on Groebner Bases, February--July 2006, organized by
RICAM, Austrian Academy of Sciences, and RISC, Johannes Kepler University, 
Linz, Austria.

The first two authors would like to thank their supervisor: the third author.

For their comments and suggestions, the authors heartily thank the
 anonymous referees and the following people: F. Caruso, 
 P. Fitzpatrick, 
P. Gianni, R. Krauthgamer, T. Mora, I. Simonetti and C. Traverso.

The authors would also like to thank the team at the computational centre MEDICIS
(http://www.medicis.polytechnique.fr/).

This work has been partially supported by STMicroelectronics contract
``Complexity issues in algebraic Coding Theory and Cryptography''.

\newcommand{\etalchar}[1]{$^{#1}$}
\providecommand{\bysame}{\leavevmode\hbox to3em{\hrulefill}\thinspace}
\providecommand{\MR}{\relax\ifhmode\unskip\space\fi MR }
\providecommand{\MRhref}[2]{%
  \href{http://www.ams.org/mathscinet-getitem?mr=#1}{#2}
}
\providecommand{\href}[2]{#2}

\end{document}